\tikzstyle directed=[postaction={decorate,decoration={markings,
    mark=at position .65 with {\arrow{latex}}}}]
\newtheorem{definition}{Definition}[section]
\newtheorem{theorem}[definition]{Theorem}
\newtheorem{proposition}[definition]{Proposition}
\newtheorem{remark}[definition]{Remark}
\newtheorem{lemma}[definition]{Lemma}
\newtheorem{RHP}[definition]{Riemann-Hilbert Problem}
\theoremstyle{definition}
\DeclareMathOperator{\Tr}{Tr}
\DeclareMathOperator{\diag}{Diag}
\def\R{\mathbb{R}}
\def\N{\mathbb{N}}
\def\C{\mathbb{C}}
\def\P{\mathbb{P}}
\def\E{\mathbb{E}}
\def\K{\mathbb{K}}
\def\a{\overrightarrow{\mathbf{a}}}
\def\k{\overrightarrow{\mathbf{k}}}
\numberwithin{equation}{section}
\title{A Riemann Hilbert approach to the study of the generating function associated to the Pearcey process}
\author{Thomas Chouteau}
\affil{Univ Angers, CNRS, LAREMA, SFR MATHSTIC, F-49000 Angers, France; \textbf{thomas.chouteau@univ-angers.fr}}
\date{}
\begin{document}
\maketitle
\begin{abstract}
Using Riemann-Hilbert methods, we establish a Tracy-Widom like formula for the generating function of the occupancy numbers of the Pearcey process. This formula is linked to a coupled vector differential equation of order three. We also obtain a non linear coupled heat equation. Combining these two equations we obtain a PDE for the logarithm of the the generating function of the Pearcey process.
\end{abstract}

{\bf Keywords:} Pearcey process, Riemann-Hilbert problems, Integrable operator, Generating function.

\section{Introduction}
\label{sec1}
The Pearcey process is a universal determinantal point process associated to the Pearcey kernel (see \eqref{Kp} for the definition of the kernel). It appeared for the first time in the spectral analysis of random matrices with external sources \cite{BH98}. If $2N$ is the size of matrices, the Pearcey process describes the behaviour, when $N\to\infty$, of eigenvalues near a point where the density of states admits a cusp-like gap. Another model linked with the Pearcey process are $2N$ 1-dimension non intersecting Brownian motions \cite{BK06} starting from 0 at $t=0$ and ending at $\pm a$ (half particles at $a$ and the others at $-a$) at time $t=1$. For this model, in the large $N$-limit, there exists a time $t_c$ such that, for $t<t_c$ the distribution of particles is supported in an interval and for $t>t_c$ it splits in two disjoint intervals. The distribution of particles of the Brownian motion for $t$ close to $t_c$ is described by the Pearcey process. A last example which reveals the universality of the Pearcey process is the one of random skew plane partition \cite{OR05}. Studying the limit shape associated to this model leads to different processes: Beta process, extended Airy process, and extended Pearcey process.\\
As known for the Airy process \cite{TW94,AvM05}, it is possible to express probabilities associated to the Pearcey process in terms of some partial differential equations by studying the Fredholm determinant of the Pearcey kernel operator \eqref{Kp}. Studying the Brownian motion model at several times and deriving the kernel associated to the Pearcey process, Tracy and Widom \cite{TW06}, obtained partial differential equations for the distribution associated to the Pearcey process. Using KP-tau functions and Hirota bilinear equations for the study of random matrices with external sources, Adler and van Moerbeke \cite{AvM07} introduced a non linear fourth order PDE for the Pearcey process. With Riemann-Hilbert methods Bertola and Cafasso \cite{BC12} obtained the same PDE as Adler and van Moerbeke and introduced a new one for the gap probability for the Pearcey process.\\ \\
The aim of this article is to present a Tracy-Widom formula for the generating function of the Pearcey process linked with two vector valued functions satisfying a system of coupled non linear third order differential equations and of non linear heat equation.\\ \\
Let $N>1$ be an integer, $\a:=(a_1,...,a_N)$ with $a_1<...<a_N$ and $\k=(k_0,k_1,...,k_{N-1},k_N)$ such that $k_0=k_N=0$, $k_j\in[0,1]$, $j=1,...,N-1$ and $k_j\neq k_{j+1}, j=0,...,N-1$.\\
Consider the Pearcey process associated to the kernel $K_P$ \eqref{Kp} (which depends on an additional positive parameter $\tau$) and define its generating function 
\begin{equation}
\label{defF}
F(\a,\tau,\k):=\E\left(\prod_{j=1}^{N-1}(1-k_j)^{\sharp{(a_j,a_{j+1})}}\right)
\end{equation}
where $\sharp I$ is the random variable counting the number of points of the process in the interval $I$.\\
This generating function is well defined for $(a_i)_{1\leqslant i \leqslant N}$ finites.\\
Computing derivatives with respect to $k_j$'s of $F(\a,\tau,\k)$ allows us to express the joint probability of occupancy numbers of particles as follows:
\begin{equation}
\P\left(\bigcap_{j=1}^{N-1}\sharp(a_j,a_{j+1})=m_j\right)=\left.\dfrac{(-1)^{m_1+...+m_{N-1}}}{m_1!...m_{N-1}!}\dfrac{\partial^{m_1+...+m_{N-1}}}{\partial k_1^{m_1}...\partial k_{N-1}^{m_{N-1}}}F(\a,\tau,\k)\right._{\k=(1,...,1)}
\end{equation}
(see for instance \cite{Joh99} for a similar computation).
\begin{remark}
Since for all $B\subset\R$ bounded Borel set $\sharp B<\infty$ almost surely, $F(\a,\tau,\k)>0$.\\
Indeed, with Jensen's inequality,
$$F(\a,\tau,\k)\geqslant \exp\left(\sum_{j=1}^{N-1}\log(1-k_j)\E\left(\sharp(a_j,a_{j+1})\right)\right)$$
and $\forall 1\leqslant j\leqslant N-1,\ \log(1-k_j)\E\left(\sharp(a_j,a_{j+1})\right)>-\infty$.
\label{invK}
\end{remark}
Charlier and Moreillon studied the generating function of the Pearcey process \cite{CM21} on intervals of the form $(-x_j,x_j)$. They considered a parameter $r$ of dilatation of their intervals and obtained an expression for the generating function of the Pearcey process linked with a Hamiltonian. Recently, Kimura and Zahabi \cite{KZ22} introduced higher order Pearcey kernels and obtained a Hamiltonian structure for the level spacing distribution by studying a different Lax Pair for the Pearcey process on intervals of the form $[-s,s]$.\\ \\
In this article, we follow a different way. We consider the generating function \eqref{defF} not necessarily on symmetric intervals, and we use a parameter $s$ of translation of intervals instead of dilatation. This work is inspired by works of Claeys and Doeraene \cite{CD18} and Charlier and Doeraene \cite{CD17} where they studied the generating function for the Airy process and for the Bessel process. Recently Cafasso and Tarricone \cite{CT21} obtained a Tracy-Widom type formula for the generating function associated to the higher order Airy process.\\
The generating function associated to the Pearcey process satisfies the following:
\begin{theorem}
\label{mainthm}
Let $s\in\R$, $F(\a+s,\tau,\k)$ be as \eqref{defF} where $\a+s=(a_1+s,...,a_N+s)$, then
$$\dfrac{\partial^2}{\partial s^2}\log\left(F(\a+s,\tau,\k)\right)=p^T(s)q(s)$$
with $(p(s),q(s))=\left(p(s,\tau,\a,\k),q(s,\tau,\a,\k)\right)$ vectors of size $N$ satisfying:\\
$i$- a coupled third order differential equation\\
\begin{equation}
\left\{\begin{array}{l}
\partial_{sss}p^T+3(\partial_{s}p^T)qp^T+3p^Tq(\partial_{s}p^T)-\tau\partial_sp^T+p^TD_{s,\a}=0\\
\partial_{sss}q+3(\partial_{s}q)p^Tq+3qp^T(\partial_{s}q)-\tau\partial_sq-D_{s,\a}q=0
\end{array}\right.
\label{pqcoupledeq}
\end{equation}
where $D_{s,\a}:=\diag(a_1+s,...a_N+s)$ and\\
$ii$- a coupled non linear heat equation\\
\begin{equation}
\label{heatequation}
\left\{\begin{array}{l}
-\frac{1}{2}\partial_{ss}p^T-\partial_{\tau}p^T=p^Tqp^T\\
-\frac{1}{2}\partial_{ss}q+\partial_{\tau}q=qp^Tq
\end{array}\right. 
\end{equation}
Moreover $p_i(s)\sim \sqrt{k_i-k_{i-1}}P(a_i+s)$ and $q_i(s)\sim \sqrt{k_i-k_{i-1}}Q(a_i+s)$ as $s\to\infty$ where
\begin{equation}
\begin{array}{rlc}
Q(s)&\coloneqq \dfrac{1}{2i\pi}\displaystyle\int_{i\R}e^{-\frac{\mu^4}{4}+\tau\frac{\mu^2}{2}+s\mu}d\mu& \textit{and } Q'''(s)-\tau Q'(s)=sQ(s)\\
P(s)&\coloneqq \dfrac{1}{2i\pi}\displaystyle\int_{\Sigma}e^{\frac{\mu^4}{4}-\tau\frac{\mu^2}{2}-s\mu}d\mu& \textit{and } P'''(s)-\tau P'(s)=-sP(s)
\label{def:P et Q}
\end{array}
\end{equation}
\end{theorem}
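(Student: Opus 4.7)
The plan is to implement the Its-Izergin-Korepin-Slavnov (IIKS) strategy, following the scheme used by Claeys-Doeraene and Charlier-Doeraene for the Airy and Bessel analogues and by Cafasso-Tarricone for the higher-order Airy. First I would rewrite $F(\a+s,\tau,\k)$ as a Fredholm determinant $\det(I-\mathcal{K}_s)$, where $\mathcal{K}_s$ acts on $L^2\bigl(\bigsqcup_{j=1}^{N-1}(a_j+s,a_{j+1}+s)\bigr)$ with the $j$-th piece weighted by $k_j$, the telescoping of these weights being what produces the jumps $k_j-k_{j-1}$. The double-contour representation of $K_P$ in terms of the Pearcey functions $P,Q$ from \eqref{def:P et Q} exhibits $\mathcal{K}_s$ as an integrable operator in the IIKS sense, so that $\log\det(I-\mathcal{K}_s)$ is encoded by a matrix Riemann-Hilbert problem for $Y(\lambda;s,\tau,\a,\k)$ whose jump contour is built from the steepest-descent contours appearing in the definition of $P$ and $Q$, and whose jump matrix is a rank-one dressing encoding the weights $\sqrt{k_j-k_{j-1}}$ attached to the endpoints $a_j+s$.

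Next, the vectors $p(s),q(s)$ are defined as the two $N$-vectors built from the residues (or appropriate boundary values) of entries of $Y$ at the $N$ points $a_j+s$, and the IIKS differential identity for $\partial_s\log\det(I-\mathcal{K}_s)$, combined with a residue computation at second order, produces $\partial_s^2\log F=p^T q$. The Lax pair is then obtained by differentiating the RHP in $s$ and in $\tau$: since the jump matrix depends on those variables only through $e^{\pm s\mu}$ and $e^{\mp\tau\mu^2/2}$, the ratios $U:=(\partial_sY)Y^{-1}$ and $V:=(\partial_\tau Y)Y^{-1}$ are rational in $\lambda$, with polynomial behaviour at infinity (of degree $3$ and $2$ respectively, reflecting the quartic Pearcey phase) plus simple poles at the points $a_j+s$ in the $s$-equation. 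Expanding $U$ and $V$ against the large-$\lambda$ expansion $Y=I+Y_1/\lambda+Y_2/\lambda^2+\cdots$ and matching coefficients yields the coupled third-order system \eqref{pqcoupledeq}, with the diagonal matrix $D_{s,\a}$ coming from the simple poles and the $\tau\partial_s$ term from the quadratic part of the phase; the coupled nonlinear heat system \eqref{heatequation} comes out of the $\tau$-Lax equation at the $\lambda^{-1}$ order.

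The hard part will be twofold. First, one must verify that the Lax system closes on the vectors $p,q$ alone: the higher coefficients $Y_2,Y_3,\ldots$ entering intermediate computations have to be reducible to polynomial expressions in $p,q$ and their $s$-derivatives via the recursion relations satisfied by the $Y_k$, and this imposes delicate combinatorial consistency checks at each order. Second, the asymptotics as $s\to\infty$ require a Deift-Zhou steepest-descent analysis on the RHP: as the exponentials $e^{\pm s\mu}$ localize the jump contribution near each $a_i+s$, a local Pearcey parametrix built from $P,Q$ at $a_i+s$ produces $p_i(s)\sim\sqrt{k_i-k_{i-1}}\,P(a_i+s)$ and $q_i(s)\sim\sqrt{k_i-k_{i-1}}\,Q(a_i+s)$; the factor $\sqrt{k_i-k_{i-1}}$ is precisely the weight normalisation of the jump matrix at $a_i+s$ and reflects the fact that only the jump of the weight at $a_i$, not its absolute value, enters the local model.
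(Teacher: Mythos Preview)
Your overall IIKS strategy is the right one and is exactly what the paper follows, but the concrete RHP/Lax structure you describe does not match and would not close. In the paper the RHP is posed in the \emph{contour} variable $\mu$ of the double-integral representation of $K_P$, on $\Sigma\cup i\R$; the endpoints $a_j+s$ appear only in the exponents $e^{\pm\theta_{a_j+s}(\mu)}$ of the jump, never as points of the $\mu$-plane. Accordingly $p,q$ are the off-diagonal blocks of the $1/\mu$ coefficient $\Gamma_1$ in the large-$\mu$ expansion of the solution, not residues at $a_j+s$. After conjugating by a diagonal gauge $T$ that absorbs the $e^{\pm\theta_{a_j+s}}$ factors, the jump of $\Psi=\Gamma T$ is constant in $s,\tau,\mu$, so $A:=(\partial_s\Psi)\Psi^{-1}$, $B:=(\partial_\mu\Psi)\Psi^{-1}$, $C:=(\partial_\tau\Psi)\Psi^{-1}$ are entire, hence polynomial. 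Since $\partial_s\theta=-\mu$, $\partial_\mu\theta=\mu^3-\tau\mu-(a_j+s)$, $\partial_\tau\theta=-\mu^2/2$, one gets $\deg A=1$, $\deg B=3$, $\deg C=2$, with no poles anywhere. The third-order system \eqref{pqcoupledeq} therefore does \emph{not} come from the $s$-equation (which is only linear in $\mu$) but from the zero-curvature condition $\partial_s B-\partial_\mu A=[A,B]$; the diagonal $D_{s,\a}$ arises as the $\mu^0$ part of $B$ contributed by $T_\mu T^{-1}$. The heat system \eqref{heatequation} comes from $\partial_s C-\partial_\tau A=[A,C]$. If you instead try a degree-$3$ $s$-Lax matrix with simple poles at $a_j+s$, the residues introduce extra matrix unknowns that the recursion cannot eliminate, and the system will not close on $p,q$.

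Two further corrections. The identity $\partial_s^2\log F=p^Tq$ is obtained in two steps: first $\partial_s\log F=-\delta$ (the $(1,1)$ entry of $\Gamma_1$) via $\partial_s\log\det(1-\K)=-\Tr\bigl((1-\K)^{-1}\partial_s\K\bigr)$ together with the IIKS formula $\Gamma_1=\int\tilde F\,\tilde g^T$, and then $\partial_s\delta=-p^Tq$ from the $1/\mu$ term in the expansion of $A$. For the $s\to\infty$ asymptotics, the paper does not build local parametrices at $a_i+s$; it performs the global rescaling $X(\mu):=\Gamma(s^{1/3}\mu)$, after which the jumps on $i\R$ and on one component of $\Sigma$ are exponentially close to $I_{N+1}$, a small-norm argument gives $X\to I_{N+1}+T$ with $T$ strictly block upper-triangular, and reading the $(i{+}1,1)$ and $(1,i{+}1)$ entries of $X_1=s^{-1/3}\Gamma_1$ yields $q_i\sim\sqrt{k_i-k_{i-1}}\,Q(a_i+s)$ and $p_i\sim\sqrt{k_i-k_{i-1}}\,P(a_i+s)$.
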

\begin{remark}
Defining $u(s,\tau)\coloneqq\log\left(F(\a+s,\tau,\k)\right)$, from equations \eqref{pqcoupledeq} and \eqref{heatequation} we obtain (see Subsection \ref{appendix} in the Appendix for computations) that $u$ satisfies
\begin{equation}
\label{reducedKP}
\dfrac{\partial^2}{\partial \tau^2}u(s,\tau)+\dfrac{1}{2}\left(\dfrac{\partial^2}{\partial s^2}u(s,\tau)\right)^2+\dfrac{1}{12}\dfrac{\partial^4}{\partial s^4}u(s,\tau)-\dfrac{1}{3}\tau\dfrac{\partial^2}{\partial s^2}u(s,\tau)=0
\end{equation}
Equation \eqref{reducedKP} was already known for the gap probability for the Pearcey process (see for instance equation (1.10) in \cite{ACvM12}).
\end{remark}
The paper is organised as follows: in Section \ref{section2} we establish a link between $F(\a+s,\tau,\k)$ and a Fredholm determinant of an integrable operator in the sense of IIKS \cite{IIKS90} and introduce the Riemann-Hilbert problem associated to this integrable operator. In Section \ref{section3} we present a Lax pair associated to the Riemann-Hilbert problem for the Pearcey kernel and obtain a coupled vector differential equation with respect to the variable $s$ and a coupled non linear heat equation with respect to $s$ and $\tau$ for elements of the Riemann-Hilbert problem. Finally, in Section \ref{section4} we compute the logarithmic derivative of $F(\a+s,\tau,\k)$ with respect to $s$ and prove Theorem \ref{mainthm} using the results of Sections \ref{section2} and \ref{section3}.\\
\section{From generating function of the Pearcey process to a Riemann-Hilbert Problem}
\label{section2}
In this section we establish a link between the generating function $F(\a+s,\tau,\k)$ and the Fredholm determinant of an integrable operator and introduce the Riemann-Hilbert Problem (RHP) associated to this integrable operator.\\
\subsection*{The Pearcey kernel operator}
The Pearcey process is a determinantal point process on $\R$ associated to the Pearcey kernel operator.\\
For $(x,\tau)\in\R^2$ define $\theta_x(\mu):=\dfrac{\mu^4}{4}-\dfrac{\tau \mu^2}{2}-x\mu$ and introduce the Pearcey Kernel 
\begin{equation}
K_P(x,y;\tau):=\dfrac{1}{(2i\pi)^2}\int_{\Sigma}\int_{i\R}\dfrac{e^{\theta_x(\mu)-\theta_y(\lambda)}}{(\lambda-\mu)}d\lambda d\mu
\label{Kp}
\end{equation}
where $\Sigma:=\Sigma_-\cup\Sigma_+$ is as in the Figure \ref{fig:contour_opérateur}:

\begin{figure}[h]
    \begin{center}
        \includegraphics[scale = 1,trim = 8cm 21cm 8cm 4cm, clip]{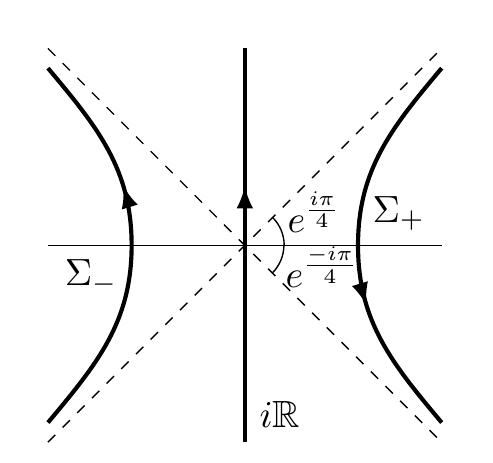}
        \caption{The contour for the operator $\K_P$}
        \label{fig:contour_opérateur}
    \end{center}
\end{figure}

Introduce $\K_P:L^2(\R)\to L^2(\R)$ with kernel $K_P$. According to Theorem 2 of \cite{Sos00}, if $s$ and $a_i$'s are all finite, then the generating function $F(\a+s,\tau,\k)$ and the Fredholm determinant of the operator $\K_P$ are linked in the following way:
\begin{equation}
F(\a+s,\tau,\k)=\det\left(1-\sum_{j=1}^{N-1}k_j\chi_{(a_j+s,a_{j+1}+s)}\K_P\right)
\label{FandKp}
\end{equation}
where $\chi_I$ is the characteristic function of the interval $I$.

\subsection*{From Pearcey kernel operator to an integrable operator}
We link the Fredholm determinant of the operator $\K_P$ with the one of an integrable operator in the sense of \cite{IIKS90} so that we study $F(\a+s,\tau,\k)$ with Riemann-Hilbert problem associated to integrable operator.\\
Define\begin{equation}
\tilde{f}(\mu)=\dfrac{1}{2i\pi}\left (\begin{array}{c}
e^{\frac{1}{2}\theta_0(\mu)}\chi_{\Sigma}(\mu) \\
\hline
\sqrt{k_1-k_0}e^{-\frac{1}{2}\theta_0(\mu)+(a_1+s)\mu}\chi_{i\R}(\mu)\\
\vdots\\
\sqrt{k_{N}-k_{N-1}}e^{-\frac{1}{2}\theta_0(\mu)+(a_N+s)\mu}\chi_{i\R}(\mu)
\end{array}\right)
\label{f}
\end{equation}
\begin{equation}
\label{g}
\tilde{g}(\mu)=\left (\begin{array}{c}
 e^{-\frac{1}{2}\theta_0(\mu)}\chi_{i\R}(\mu) \\
\hline
\sqrt{k_1-k_0}e^{\frac{1}{2}\theta_0(\mu)-(a_1+s)\mu}\chi_{\Sigma}(\mu)\\
\vdots\\
\sqrt{k_{N}-k_{N-1}}e^{\frac{1}{2}\theta_0(\mu)-(a_N+s)\mu}\chi_{\Sigma}(\mu)
\end{array}\right)
\end{equation}
We decompose $\tilde{f}$ and $\tilde{g}$ in a block $1\times 1$ and a vector of size $N$ because in what follows we will study a Riemann-Hilbert problem of size $(N+1)\times (N+1)$ and matrices will be partitioned in four blocks of size $1\times 1$, $1\times N$, $N\times 1$ and $N\times N$.\\
Define $\K:L^2(\Sigma\cup i\R)\to L^2(\Sigma\cup i\R)$ integrable operator in the sense of \cite{IIKS90} with kernel
\begin{equation}
K(u,v)=\dfrac{{\tilde{f}}^T(u)\tilde{g}(v)}{u-v}
\label{defK}
\end{equation}
We establish the relation between $F(\a+s,\tau,\k)$ and the Fredholm determinant of the operator $\K$ as follow:
\begin{proposition}
Let $F(\a+s,\k)$ be as in Theorem \ref{mainthm} and $\K$ be as in \eqref{defK}. Then
$$F(\a+s,\tau,\k)=\det(1-\K)_{L^2(\Sigma\cup i\R)}$$
\label{FandK}
\end{proposition}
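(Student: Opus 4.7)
The plan is to reduce the identity to a chain of cyclic commutations $\det(1-XY)=\det(1-YX)$, starting from \eqref{FandKp} and reconstructing the kernel \eqref{defK} on the other side. First I would recast $F(\a+s,\tau,\k)=\det(1-M\K_P)_{L^2(\R)}$ with $M$ the multiplication operator by $m(x):=\sum_{j=1}^{N-1}k_j\chi_{(a_j+s,a_{j+1}+s)}(x)$, and record the telescoping identity
$$m(x)=\sum_{j=1}^{N}(k_j-k_{j-1})\chi_{(a_j+s,\infty)}(x),$$
valid because $k_0=k_N=0$. This rewriting is what produces the sum over $j$ encoded in \eqref{f}--\eqref{g}.

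Next I would split $\theta_x(\mu)-\theta_y(\lambda)=\bigl(-x\mu+\tfrac{1}{2}\theta_0(\mu)\bigr)+\tfrac{1}{2}\bigl(\theta_0(\mu)-\theta_0(\lambda)\bigr)+\bigl(y\lambda-\tfrac{1}{2}\theta_0(\lambda)\bigr)$ inside \eqref{Kp} so as to factor $\K_P=\tilde{A}\,\tilde{C}\,\tilde{B}$ through the three operators $\tilde{A}:L^2(\Sigma)\to L^2(\R)$, $\tilde{C}:L^2(i\R)\to L^2(\Sigma)$, $\tilde{B}:L^2(\R)\to L^2(i\R)$ with kernels
$$\tilde{A}(x,\mu)=\tfrac{1}{2i\pi}e^{-x\mu+\theta_0(\mu)/2},\quad \tilde{C}(\mu,\lambda)=\frac{e^{(\theta_0(\mu)-\theta_0(\lambda))/2}}{\lambda-\mu},\quad \tilde{B}(\lambda,y)=\tfrac{1}{2i\pi}e^{y\lambda-\theta_0(\lambda)/2}.$$
The cyclic identity then gives $F=\det(1-M\tilde{A}\tilde{C}\tilde{B})_{L^2(\R)}=\det(1-\tilde{B}M\tilde{A}\tilde{C})_{L^2(i\R)}$, and the $y$-integration against $m$ collapses by the telescoping identity to $-\frac{1}{\lambda-\mu}\sum_j(k_j-k_{j-1})e^{(a_j+s)(\lambda-\mu)}$. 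Collecting the exponentials produces the kernel
$$\sum_{j=1}^{N}(k_j-k_{j-1})e^{-\theta_0(\lambda)/2-\theta_0(\lambda')/2+(a_j+s)\lambda}\int_\Sigma\frac{e^{\theta_0(\mu)-(a_j+s)\mu}}{(2i\pi)^2(\lambda-\mu)(\mu-\lambda')}\,d\mu$$
as the kernel of $\tilde{B}M\tilde{A}\tilde{C}$ on $L^2(i\R)$.

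For the right-hand side, since $\tilde{f}^{T}(u)\tilde{g}(v)=0$ whenever $u,v$ lie on the same component of $\Sigma\cup i\R$, the operator $\K$ is block anti-diagonal on $L^2(\Sigma)\oplus L^2(i\R)$, $\K=\begin{pmatrix}0 & R\\ S & 0\end{pmatrix}$, with off-diagonal blocks read off from \eqref{f}--\eqref{g}. The block Schur complement formula gives $\det(1-\K)_{L^2(\Sigma\cup i\R)}=\det(1-SR)_{L^2(i\R)}$, and a direct computation of $SR$ from the explicit formulas for $\tilde{f},\tilde{g}$ reproduces precisely the kernel displayed just above, which closes the argument.

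The delicate point is the factorization chosen in the second paragraph: the apparently asymmetric half-splitting of $\theta_0$ built into \eqref{f} and \eqref{g} is exactly what makes the kernel of $\tilde{B}M\tilde{A}\tilde{C}$ coincide pointwise with that of $SR$, with no residual conjugation by the unbounded multiplier $e^{\theta_0/2}$; any other natural splitting would give a kernel only similar to $SR$, forcing an extra argument to transfer the determinant. Trace-class and convergence issues are automatic because $m$ is compactly supported and the integrands decay super-exponentially along $\Sigma$ (which lies in the sectors where $\Re\mu^4<0$) and along $i\R$, so the IIKS framework of \cite{IIKS90} applies directly.
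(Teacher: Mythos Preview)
Your proof is correct. Both you and the paper reduce $\det(1-\K)$ to $\det(1-SR)_{L^2(i\R)}$ via the block anti-diagonal structure of $\K$ on $L^2(\Sigma)\oplus L^2(i\R)$ and a Schur complement; the difference lies in how each side bridges $L^2(i\R)$ and $L^2(\R)$. The paper conjugates $SR=\sum_j\mathcal{G}_j\circ\mathcal{F}$ by a multiplication operator $\mathcal{M}$ and a Fourier-type map $\mathcal{T}:L^2(i\R)\to L^2(\R)$, then evaluates the resulting kernel on $L^2(\R)$ through a residue integral $\int_{i\R}e^{\xi(a_j+s-x)}(\ldots)\,\tfrac{d\xi}{2i\pi}$ that splits into the cases $x\gtrless a_j+s$ and telescopes only at the end. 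You run the bridge in the opposite direction: factor $\K_P=\tilde A\tilde C\tilde B$, cycle $\tilde B$ to the front, and replace the Fourier/residue step by the elementary closed-form integral $\int_{\R}m(y)e^{y(\lambda-\mu)}dy$, which collapses by Abel summation to the sum over $j$. Your route is a bit shorter and avoids both the explicit Fourier operator and the case split on the sign of $a_j+s-x$; the paper's route follows the template of \cite{BC12,Gir14} and makes the Fourier inversion explicit. The two arguments meet at the same kernel on $L^2(i\R)$, and your remark that the half-splitting of $\theta_0$ built into $\tilde f,\tilde g$ is precisely what avoids a residual unbounded conjugation is a worthwhile observation that the paper leaves implicit.
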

\begin{proof}
The idea of the proof is to compose the operator $\K$ with multiplication operator and Fourier operator so that its Fredholm determinant will not change and to obtain an operator defined on $L^2(\R)$ with kernel $\displaystyle\sum_{j=1}^{N-1}k_j\chi_{(a_j+s,a_{j+1}+s)}(x)K_P(x,y)$.\\
The proof of this proposition is similar to the one of Theorem 4.1 in \cite{BC12} and can be adapted as follows. \\
Since $i\R$ and $\Sigma$ are disjoint, one can decompose $L^2(\Sigma\cup i\R)$ as $L^2(i\R)\oplus L^2(\Sigma)$ and write the following equality using matrix notation:
$$\det(1-\K)_{L^2(\Sigma\cup i\R)}=\det\left(1-\begin{pmatrix}
0 & \displaystyle\sum_{j=1}^N\mathcal{G}_j\\
\mathcal{F} & 0
\end{pmatrix}\right)_{L^2(i\R)\oplus L^2(\Sigma)}=\det\left(1-\sum_{j=1}^N\mathcal{G}_j\circ\mathcal{F}\right)_{L^2(i\R)}$$
where $\mathcal{F}$ and $\mathcal{G}_j$ are defined below with $\mathcal{G}_j$ depending on $k_j$'s
$$\begin{array}{rrrcl}
\mathcal{F}&:&L^2(i\R) & \longrightarrow & L^2(\Sigma)\\
 & &f &\longmapsto&\displaystyle\frac{e^{\frac{1}{2}\theta_0(\lambda)}}{2i\pi}\int_\Sigma\frac{e^{-\frac{1}{2}\theta_0(\mu)}}{\lambda-\mu}f(\mu)d\mu
\end{array}
$$
$$\begin{array}{rrrcl}
\mathcal{G}_j&:&L^2(\Sigma) & \longrightarrow & L^2(i\R)\\
 & &g &\longmapsto&\displaystyle(k_j-k_{j-1})\frac{e^{-\frac{1}{2}\theta_0(\xi)+\xi(a_j+s)}}{2i\pi}\int_\Sigma\frac{e^{\frac{1}{2}\theta_0(\lambda)-\lambda(a_j+s)}}{\xi-\lambda}g(\lambda)d\lambda
\end{array}
$$
Then composing with a multiplication operator $\mathcal{M}$ and using a Fourier operator $\mathcal{T}$(for example as in \cite{Gir14}) it is possible to relate Fredholm determinants of $\K$ and $\K_P$. Here the Fourier composition allows to go from an operator on $L^2(i\R)$ to one on $L^2(\R)$.\\
With $\mathcal{M}:=e^{-\frac{1}{2}\theta_0(\mu)}$ and
$$\begin{array}{rrrcl}
\mathcal{T}&:&L^2(i\R) & \longrightarrow & L^2(\R)\\
 & &f &\longmapsto&\displaystyle\dfrac{1}{\sqrt{2i\pi}}\int_{i\R}e^{-\xi x}f(\xi)d\xi
\end{array}$$
$\mathcal{T}\circ\mathcal{M}^{-1}\circ\mathcal{G}_j\circ\mathcal{F}\circ\mathcal{M}\circ\mathcal{T}^{-1}$ has kernel
$$\mathcal{L}_j(x,y)=\dfrac{k_j-k_{j-1}}{(2i\pi)^2}\int_{i\R}e^{\xi(a_j+s-x)}\int_{i\R}\int_\Sigma\dfrac{e^{\theta_{a_j+s}(\lambda)-\theta_y(\mu)}}{(\xi-\lambda)(\lambda-\mu)}d\lambda d\mu \dfrac{d\xi}{2i\pi}$$
$$\mathcal{L}_j(x,y)=\left\{\begin{aligned}
\dfrac{k_j-k_{j-1}}{(2i\pi)^2}\int_{i\R}\int_{\Sigma_+}\dfrac{e^{\theta_x(\lambda)-\theta_y(\mu)}}{(\mu-\lambda)}d\lambda d\mu,&\ x>a_j+s\\
-\dfrac{k_j-k_{j-1}}{(2i\pi)^2}\int_{i\R}\int_{\Sigma_-}\dfrac{e^{\theta_x(\lambda)-\theta_y(\mu)}}{(\mu-\lambda)}d\lambda d\mu, &\ x<a_j+s\end{aligned}\right.
$$
If $x<a_1+s$, then $x<a_j+s$ for all $j$ and $\sum_{j=1}^N\mathcal{T}\circ\mathcal{M}^{-1}\circ\mathcal{G}_j\circ\mathcal{F}\circ\mathcal{M}\circ\mathcal{T}^{-1}$ has kernel
$$-\dfrac{\sum_{j=1}^Nk_j-k_{j-1}}{(2i\pi)^2}\int_{\Sigma_-}\int_{i\R}\dfrac{e^{\theta_x(\mu)-\theta_y(\lambda)}}{(\lambda-\mu)}d\lambda d\mu=-\dfrac{k_N-k_{0}}{(2i\pi)^2}\int_{\Sigma_-}\int_{i\R}\dfrac{e^{\theta_x(\mu)-\theta_y(\lambda)}}{(\lambda-\mu)}d\lambda d\mu=0$$
The same holds for $x>a_N+s$.\\
If $x\in(a_j+s,a_{j+1}+s)$, then \begin{align*}
\sum_{j=1}^N\mathcal{L}_j(x,y)&=\dfrac{\sum_{\ell=1}^jk_\ell-k_{\ell-1}}{(2i\pi)^2}\int_{i\R}\int_{\Sigma_+}\dfrac{e^{\theta_x(\lambda)-\theta_y(\mu)}}{(\mu-\lambda)}d\lambda d\mu\\&-\dfrac{\sum_{\ell=j+1}^Nk_\ell-k_{\ell-1}}{(2i\pi)^2}\int_{i\R}\int_{\Sigma_-}\dfrac{e^{\theta_x(\lambda)-\theta_y(\mu)}}{(\mu-\lambda)}d\lambda d\mu\\
&=k_j\chi_{(a_j+s,a_{j+1}+s)}(x)K_P(x,y)
\end{align*}
This concludes the proof.
\end{proof}
We can study $\det(1-\K)_{L^2(\Sigma\cup i\R)}$ with the theory of Riemann-Hilbert problems associated to an integrable operator. We describe the Riemann-Hilbert problem associated to $\K$ in what follows.\\
\subsection*{Riemann-Hilbert Problem associated to Pearcey kernel operator}
The contours for the Riemann-Hilbert problem associated to $F(\a+s,\tau,\k)$ is $\Sigma\cup i\R$ oriented as in the previous figure.\\
For the jump matrix, introduce $$f(\mu):=\begin{pmatrix}
-\sqrt{k_1-k_0}e^{-\theta_{a_1+s}(\mu)}\\
\vdots\\
-\sqrt{k_{N}-k_{N-1}}e^{-\theta_{a_N+s}(\mu)}
\end{pmatrix}\chi_{i\R}(\mu) \text{ and}$$ $$g(\mu):=\begin{pmatrix}
-\sqrt{k_1-k_0}e^{\theta_{a_1+s}(\mu)}\\
\vdots\\
-\sqrt{k_{N}-k_{N-1}}e^{\theta_{a_N+s}(\mu)}
\end{pmatrix}\chi_{\Sigma}(\mu)$$

\begin{RHP}[RHP for $\Gamma$]
\label{RHP}
We consider the Riemann-Hilbert problem with contours $\Sigma\cup i\R$ and jump matrix $J(\mu;\a,\tau,s):=\left (\begin{array}{c|c}
1 & g^T(\mu) \\
\hline
f(\mu) & I_n 
\end{array}\right)=I_{N+1}-2i\pi\tilde{f}(\mu)\tilde{g}^T(\mu)$ where $I_k\in M_k(\C)$ is the identity matrix.\\We search a matrix valued function $\Gamma(\mu)=\Gamma(\mu;\a,\tau,s)$ such that:
\begin{itemize}
\item $\Gamma:\C\setminus\Sigma\cup i\R\to\mathcal{G}\ell_{N+1}(\C)$ is analytic
\item $\Gamma_+(\mu)=\Gamma_-(\mu)J(\mu)$, $\mu\in\Sigma\cup i\R$ where $\Gamma$ is continuous up to boundary of the contours and $\Gamma_\pm(\mu)$ are non-tangential limits approaching $\mu$ from left(+) or right(-).
\item 
\begin{equation}
\Gamma(\mu)=I_{N+1}+\displaystyle\sum_{j\geqslant 1}\dfrac{\Gamma_j}{\mu^j}=I_{N+1}+\dfrac{1}{\mu}\left (\begin{array}{c|c}
-\delta(\a,\tau,s) & p^T(\a,\tau,s) \\
\hline
q(\a,\tau,s) & \Delta(\a,\tau,s)
\end{array}\right)+\mathcal{O}\left(\mu^{-2}\right)\text{ as } \mu\to\infty
\label{asymp}
\end{equation}
\end{itemize}
\end{RHP}
\begin{remark}
If $\Gamma$ satisfies the previous RHP, then $\det(\Gamma)$ is entire and according to the asymptotic $\det(\Gamma)\equiv 1$.
Since $\det(\Gamma)\equiv 1$, the previous RHP has an unique solution (if it exists), $\Tr(\Gamma_1)=0$ and $\delta=\Tr(\Delta)$.\\
\label{det1tr0}
\end{remark}
\section{Study of the Riemann-Hilbert Problem associated to the Pearcey kernel operator}
\label{section3}
In this section, we obtain a Lax Pair associated to the previous RHP which leads to a system of coupled vector differential equation for $p^T$ and $q$ (see equation \eqref{asymp} for the definition of $p^T$ and $q$). Studying the derivative with respect to $\tau$ yields to a coupled non linear heat equation satisfied by $p^T$ and $q$.

\subsection*{A Lax Pair for $\Psi$}
Let $\Gamma$ be a solution of $RHP$ \ref{RHP} and $T$ be the gauge transformation:
$$T(\mu):= e^{\dfrac{1}{N+1}\displaystyle\sum_{j=1}^N\theta_{a_j+s}(\mu)} \diag\left(1,e^{-\theta_{a_1+s}(\mu)},...,e^{-\theta_{a_N+s}(\mu)}\right).$$
Defining $\Psi(\mu):=\Gamma(\mu)T(\mu)$, the following result holds for $\Psi$.
\begin{proposition}[Lax pair for $\Psi$]
If $\Gamma$ is solution of RHP \ref{RHP} then $\Psi$ satisfy a system of partial differential equation.
\begin{equation}
\left\{\begin{array}{c}
\partial_s\Psi(\mu)=A(\mu)\Psi(\mu)\\
\partial_\mu\Psi(\mu)=B(\mu)\Psi(\mu)\\
\partial_\tau\Psi(\mu)=C(\mu)\Psi(\mu)
\end{array}\right.\\
\label{Laxpair}
\end{equation}
where \begin{equation}
A(\mu)=\mu A_1+A_0=\dfrac{\mu}{N+1}\left (\begin{array}{c|c}
-N & (0) \\
\hline
(0) & I_N 
\end{array}\right)+\left (\begin{array}{c|c}
0 & p^T \\
\hline
-q & (0) 
\end{array}\right)
\end{equation}
\begin{equation}
B(\mu)=\mu^3\tilde{B}_3+\mu^2B_2+\mu(\tilde{B}_1+B_1)+\tilde{B}_0+B_0
\end{equation} with $B_j$'s depending on $\Gamma_j$'s and 
$\tilde{B}_3=-A_1,\ \tilde{B}_1=\tau A_1$ $$\tilde{B}_0=sA_1+\dfrac{1}{N+1}\diag\left(-\sum_{j=1}^Na_j,Na_1-\sum_{j\neq 1}a_j,...,Na_N-\sum_{j\neq N}a_j\right)$$
and
\begin{equation}
C(\mu)=\mu^2\tilde{C}_2+\mu C_1+C_0=\dfrac{\mu^2}{2}A_1+\mu C_1+C_0
\end{equation}
with $C_1$ and $C_0$ depending on $\Gamma_j$'s.
\end{proposition}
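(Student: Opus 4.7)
The plan is to use the standard dressing argument for Riemann--Hilbert problems whose jumps depend exponentially on the external parameters: by choosing the gauge $T(\mu)$ to absorb the parameter dependence of the jump, the logarithmic derivatives of $\Psi$ in each of $s$, $\mu$, $\tau$ extend to entire matrix-valued functions on $\C$, and their polynomial growth at infinity (dictated by the corresponding derivatives of $T\,T^{-1}$) pins them down via Liouville.

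First I would verify that $T$ is unimodular (the exponents sum to zero, so $\det T=1$) and that the conjugated jump is parameter-free. Since $T$ is diagonal, $(T^{-1}JT)_{ij}=T_{ii}^{-1}J_{ij}T_{jj}$, so on $i\R$ the off-diagonal blocks of $J$ contribute $e^{\theta_{a_k+s}(\mu)}f_k(\mu)=-\sqrt{k_k-k_{k-1}}$, and symmetrically on $\Sigma$ one gets $-\sqrt{k_k-k_{k-1}}$. Thus $\Psi_+=\Psi_-\widehat{J}$ with $\widehat{J}$ a constant matrix, independent of $\mu$, $s$, $\tau$. Consequently $A:=\partial_s\Psi\cdot\Psi^{-1}$, $B:=\partial_\mu\Psi\cdot\Psi^{-1}$ and $C:=\partial_\tau\Psi\cdot\Psi^{-1}$ have no jumps across $\Sigma\cup i\R$, and using regularity of $\Gamma$ away from the contour they extend to entire matrix-valued functions on $\C$.

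Next I would read off the polynomial degree of each by writing, for $\alpha\in\{s,\mu,\tau\}$,
$$\partial_\alpha\Psi\cdot\Psi^{-1}=\partial_\alpha\Gamma\cdot\Gamma^{-1}+\Gamma\,\bigl(\partial_\alpha T\cdot T^{-1}\bigr)\,\Gamma^{-1}.$$
The first summand is $O(\mu^{-2})$ by \eqref{asymp}, hence vanishes at infinity. The second grows like $\partial_\alpha T\,T^{-1}$ because $\Gamma\to I_{N+1}$. A direct computation using $\theta_x(\mu)=\mu^4/4-\tau\mu^2/2-x\mu$ gives
$$\partial_sT\cdot T^{-1}=\mu\,A_1,\quad \partial_\tau T\cdot T^{-1}=\tfrac{\mu^2}{2}A_1+\text{lower},\quad \partial_\mu T\cdot T^{-1}=-\mu^3 A_1+\tau\mu\,A_1+\tilde B_0,$$
with $A_1=\tfrac{1}{N+1}\mathrm{diag}(-N,1,\dots,1)$ and $\tilde B_0$ the advertised constant diagonal. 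By Liouville, $A$, $C$, $B$ are polynomials in $\mu$ of degree $1$, $2$, $3$ respectively, with leading coefficients $A_1$, $\tfrac12 A_1$, $-A_1$, and with $\tilde B_1=\tau A_1$, $\tilde B_0$ as claimed.

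Finally I would identify $A_0$ by matching the $\mu^0$ coefficient: since $\partial_s\Gamma\cdot\Gamma^{-1}=O(\mu^{-1})$, the constant term of $A(\mu)$ is the constant term of $\Gamma(\mu)\,(\mu A_1)\,\Gamma^{-1}(\mu)$, which by the expansion $\Gamma^{-1}=I-\Gamma_1/\mu+O(\mu^{-2})$ equals $[\Gamma_1,A_1]$. Using the block form of $\Gamma_1$ in \eqref{asymp} and $A_1=\tfrac{1}{N+1}I_{N+1}-e_{11}$, a two-line computation yields
$$[\Gamma_1,A_1]=\begin{pmatrix}0 & p^T\\ -q & 0\end{pmatrix}=A_0.$$
The same recipe (match $\mu^k$ coefficients using $\Gamma=I+\Gamma_1/\mu+\Gamma_2/\mu^2+\cdots$) expresses $B_2,B_1,B_0$ and $C_1,C_0$ as polynomials in the $\Gamma_j$'s; these are the coefficients merely named but not used explicitly in the proposition. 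The main bookkeeping burden is the identification of the subleading coefficients of $B$, which requires knowing $\Gamma$ up to order $\mu^{-3}$; however, for the statement as phrased only the polynomial degrees, the tilded coefficients, and the constant term of $A$ need to be computed, so no genuine obstacle arises.
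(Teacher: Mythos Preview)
Your proof is correct and follows essentially the same dressing argument as the paper: conjugate by $T$ to make the jump constant, invoke Liouville on the logarithmic derivatives, and read off the polynomial coefficients from the asymptotics of $\partial_\alpha T\cdot T^{-1}$ and $\Gamma$. One minor slip: for $\alpha\in\{s,\tau\}$ the term $\partial_\alpha\Gamma\cdot\Gamma^{-1}$ is $O(\mu^{-1})$ rather than $O(\mu^{-2})$ (as you yourself correctly use when identifying $A_0$), though this does not affect the argument.
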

\begin{proof}
It is easy to show that if $\Gamma$ satisfies the previous $RHP$ then $\Psi$ satisfies a RHP with jump matrix which does not depend on $\a,\tau,s$ and $\mu$ (the jump only depends on \\$k_j, j=0,...,N$).

Then $\Psi$ and $\partial_{s}\Psi$ have the same jump on the contours. From this fact we deduce that
$A(\mu):=\partial_{s}\Psi(\mu)\Psi(\mu)^{-1}$ is entire. Using Liouville's theorem with asymptotics of $\Psi$ and $\partial_{s}\Psi$ from their RHP we conclude $A$ is a polynomial of degree 1 in $\mu$. More precisely, we compute $A$ and obtain:
$$A(\mu)=\mu A_1+A_0=\dfrac{\mu}{N+1}\left (\begin{array}{c|c}
-N & (0) \\
\hline
(0) & I_N 
\end{array}\right)+\left (\begin{array}{c|c}
0 & p^T \\
\hline
-q & (0) 
\end{array}\right)$$

Using the same method we conclude that $B(\mu):=\partial_{\mu}\Psi(\mu)\Psi(\mu)^{-1}$ (respectively $C(\mu):=\partial_{\tau}\Psi(\mu)\Psi(\mu)^{-1}$) is a polynomial of degree 3 (respectively degree 2).
We will not do all computations for $B$ and $C$ with the asymptotic as we did for $A$: we precise what we will do.\\
Computations with the asymptotic involve $T$,  its partial derivative with respect to $\mu$ (respectively $\tau$) and $(\Gamma_j)_{j\geqslant 1}$. For now we will only compute terms in $B$ (respectively $C$) which does not depend on $(\Gamma_j)_{j\geqslant 1}$.\\
We start with $B$ and write it as:
$$B(\mu)=\mu^3\tilde{B}_3+\mu^2{B_2}+\mu(\tilde{B}_1+B_1)+\tilde{B}_0+B_0$$ where only $(B_j)_{j\in\{0,1,2\}}$ (respectively $(\tilde{B}_j)_{j\in\{0,1,3\}}$) depends (respectively does not depend) on $(\Gamma_j)_{j\geqslant 1}$.
$$B(\mu)=(\Gamma_\mu T+\Gamma T_\mu)T^{-1}\Gamma^{-1}=\Gamma_\mu\Gamma^{-1}+\Gamma T_\mu T^{-1}\Gamma^{-1}$$
$\Gamma_\mu\Gamma^{-1}=\mathcal{O}(\mu^{-2})$ as $\mu\to\infty$, then
$$B(\mu)\sim\left(I_{N+1}+\dfrac{\Gamma_1}{\mu}+\mathcal{O}\left(\mu^{-2}\right)\right)T_\mu T^{-1}\left(I_{N+1}-\dfrac{\Gamma_1}{\mu}+\mathcal{O}\left(\mu^{-2}\right)\right)$$
$B(\mu)=T_\mu T^{-1}+\displaystyle\sum_{j=0}^2\mu^jB_j$ because of Liouville's theorem. We compute $T_\mu T^{-1}$ and obtain:
$$\tilde{B}_3=-A_1,\ \tilde{B}_1=\tau A_1$$ $$\tilde{B}_0=sA_1+\dfrac{1}{N+1}\diag\left(-\sum_{j=1}^Na_j,Na_1-\sum_{j\neq 1}a_j,...,Na_N-\sum_{j\neq N}a_j\right)$$
Similarly $C(\mu)=T_\tau T^{-1}+\displaystyle\sum_{j=0}^1\mu^j C_j$ and $T_\tau T^{-1}=\dfrac{\mu^2}{2}A_1$.\\
\end{proof}
\begin{remark}
\label{deltas}
For $\delta$, $p^T$ and $q$ as in \eqref{asymp} the following holds: $\partial_s\delta=-p^Tq$.\\
Actually, computing the term in $1/\mu$ in the asymptotic of $A$, we obtain $\partial_s\Gamma_1+[\Gamma_2,A_1]-[\Gamma_1,A_1]\Gamma_1$. Then, because of Liouville's theorem this term is $0$ and the block $1\times 1$ on the diagonal block matrix leads to the equation. This equation will be useful later when we will compute the logarithmic derivative for $F(\a+s,\tau,\k)$.\\
\end{remark}
\begin{proposition}
Let $p^T$ and $q$ be as in \eqref{asymp}. Then they satisfy the following coupled vector $3^{rd}$ order differential equation and non linear coupled heat equation:

\begin{equation}
\left\{\begin{array}{l}
\partial_{sss}p^T+3(\partial_{s}p^T)qp^T+3p^Tq(\partial_{s}p^T)-\tau\partial_sp^T+p^TD_{s,\a}=0\\
\partial_{sss}q+3(\partial_{s}q)p^Tq+3qp^T(\partial_{s}q)-\tau\partial_sq-D_{s,\a}q=0
\end{array}\right.
\end{equation}

\begin{equation}
\left\{\begin{array}{l}
-\frac{1}{2}\partial_{ss}p^T-\partial_{\tau}p^T=p^Tqp^T\\
-\frac{1}{2}\partial_{ss}q+\partial_{\tau}q=qp^Tq
\end{array}\right. .
\end{equation}
\label{prop8}
\end{proposition}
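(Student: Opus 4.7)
The plan is to exploit the zero-curvature compatibility of the Lax pair established in the previous proposition. Since $\Psi$ satisfies simultaneously $\partial_s\Psi=A\Psi$, $\partial_\mu\Psi=B\Psi$ and $\partial_\tau\Psi=C\Psi$, the equality of mixed partials forces
\begin{equation}
\partial_\mu A - \partial_s B + [A,B] = 0, \qquad \partial_s C - \partial_\tau A + [A,C] = 0.
\end{equation}
The coupled third-order ODE will come from the first identity and the coupled heat equation from the second. Both relations are polynomial in $\mu$, so matching coefficients of each power of $\mu$ yields a finite system of matrix identities that can be solved recursively.

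For the third-order ODE, I would expand $\partial_\mu A - \partial_s B + [A,B] = 0$ in powers of $\mu$ from degree $4$ down to $0$. The degree-$4$ piece is automatic because $\tilde{B}_3=-A_1$; the intermediate coefficients at $\mu^3$, $\mu^2$ and $\mu^1$ can be used recursively to solve for the off-diagonal blocks of $B_2$, $B_1$ and $B_0$ in terms of $p^T, q$ and their $s$-derivatives up to second order. This elimination is possible because $[A_1,\cdot]$ acts on the off-diagonal $1\times N$ and $N\times 1$ blocks by a nonzero scalar, hence is invertible there. Substituting the resulting expressions into the off-diagonal blocks of the $\mu^0$ coefficient produces the desired cubic ODEs: the linear term $p^TD_{s,\a}$ (resp.\ $-D_{s,\a}q$) arises from $[A_0,\diag(\ldots)]$ appearing in $\tilde{B}_0 = sA_1 + \diag(\ldots)$; the $-\tau\partial_s p^T$ (resp.\ $-\tau\partial_s q$) term from $\tilde{B}_1=\tau A_1$; and the cubic nonlinearities $3(\partial_s p^T)qp^T + 3p^Tq(\partial_s p^T)$ (resp.\ its $q$-analogue) from the residual products $[A_0,B_j]$ with $A_0$ itself built from $p^T$ and $q$.

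For the heat equation, I would apply the same strategy to $\partial_s C - \partial_\tau A + [A,C] = 0$. Since $C$ has degree only $2$, there are fewer steps: the $\mu^2$ and $\mu^1$ coefficients determine the off-diagonal blocks of $C_1$ and the diagonal of $C_0$ in terms of $p^T, q$ and their first $s$-derivatives, and the off-diagonal $\mu^0$ equation then collapses to
\begin{equation}
-\tfrac{1}{2}\partial_{ss}p^T - \partial_\tau p^T = p^Tqp^T, \qquad -\tfrac{1}{2}\partial_{ss}q + \partial_\tau q = qp^Tq,
\end{equation}
with the cubic right-hand sides coming from $[A_0,C_0]$ after the substitutions.

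The main obstacle is purely combinatorial: one has to track the $2\times 2$ block decomposition carefully and use the diagonal identities extracted at top powers of $\mu$ (notably $\Tr\Gamma_1=0$ from Remark \ref{det1tr0} and $\partial_s\delta=-p^Tq$ from Remark \ref{deltas}) to eliminate all dependence on the subleading coefficients $\Gamma_2$ and $\Gamma_3$ from the final equations. Once this bookkeeping is done cleanly, the resulting off-diagonal identities involve only $p^T$, $q$ and their $s$- and $\tau$-derivatives, yielding the stated coupled systems.
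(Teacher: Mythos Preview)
Your proposal is correct and follows essentially the same route as the paper: both derive the third-order system from the zero-curvature relation $\partial_s B-\partial_\mu A=[A,B]$ and the heat system from $\partial_s C-\partial_\tau A=[A,C]$, by matching powers of $\mu$ and using the $2\times2$ block structure to solve recursively for the blocks of $B_2,B_1,B_0$ (resp.\ $C_1,C_0$), with the off-diagonal $\mu^0$ equations delivering the stated coupled systems. The only minor bookkeeping difference is that the paper fixes the diagonal blocks $B_j^{11},B_j^{22}$ by integrating in $s$ (a formal $\partial_s^{-1}$) and then reading the integration constants directly from the asymptotic expansion of $\Psi$, rather than invoking the trace identities you mention; in particular $\Tr\Gamma_1=0$ and $\partial_s\delta=-p^Tq$ are not actually needed here.
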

\begin{proof}
The compatibility condition for the Lax pair of $\Psi$ leads to the equation
\begin{equation}
\partial_sB-\partial_\mu A=[A,B]
\label{compcond}
\end{equation}
We use the same approach as in \cite{WE07} and \cite{CT21}. If we write $B_j^{ki}$ blocs of matrix $B_j$ where $B_j^{11}$ is a scalar, $B_j^{12}$ is a row of size $N$, $B_j^{21}$ a column of size $N$ and $B_j^{22}$ a $N\times N$ matrix. Then \eqref{compcond} gives a polynomial equation in $\mu$ and we obtain an equation for every monomial. This leads to the following equations:
\begin{equation}
B_2^{12}=-p^T,\ B_2^{21}=q
\label{pq}
\end{equation}
\begin{equation}
\left\{\begin{array}{l}
\partial_sB_j^{11}=p^TB_j^{21}+B_j^{12}q\\
\partial_sB_j^{12}=-B_{j-1}^{12}\delta_{j\neq 0}+p^TB_j^{22}-B_j^{11}p^T+\tau p^T\delta_{j,1}+p^TD_{s,\a}\delta_{j,0}\\
\partial_sB_j^{21}=B_{j-1}^{12}\delta_{j\neq 0}-qB_j^{11}+B_j^{22}q+\tau q\delta_{j,1}+D_{s,\a}q\delta_{j,0}\\
\partial_sB_j^{22}=-qB_j^{12}-B_j^{21}p^T\\
\end{array}\right.
\label{systeme_eq_Bj}
\end{equation}
where $D_{s,\a}=\diag(a_1+s,...,a_N+s)$ and $\delta_{i,j}$ is the Kronecker delta.\\
We define formally the operator $\partial_s^{-1}$ such that $\partial_s^{-1}\partial_s=1$. From the first and the last equation of \eqref{systeme_eq_Bj} we obtain 
$B_j^{11}=\partial_s^{-1}\left(p^TB_j^{21}+B_j^{12}q\right)$ and $B_j^{22}=-\partial_s^{-1}\left(qB_j^{12}+B_j^{21}p^T\right)$.
With $j=2$, $$B_2^{11}=\partial_s^{-1}(0)=c_2^{11},\ \ B_2^{22}=\partial_s^{-1}((0))=c_2^{22}$$ with $c_2^{11}$ and $c_2^{22}$ independent of $s$. Actually with the asymptotics we obtain:
$B_2=-A_0$. Then $c_0^{11}=0$ and $c_0^{22}=(0)$.
Same for $B_1$, even if it depends on $\Gamma_2$, diagonal terms only depend on $\Gamma_1$. The asymptotic leads to
$$B_1^{11}=p^Tq,\ \ B_1^{22}=-qp^T$$
Using second and third equations of \eqref{systeme_eq_Bj} we compute $B_1^{12}$ and $B_1^{21}$ with $j=2$ then $B_0^{12}$ and $B_0^{21}$ with $j=1$.
$$B_1^{12}=\partial_sp^T,\ \ B_1^{21}=\partial_sq$$
$$B_0^{12}=-\partial_{ss}p^T-2p^Tqp^T+\tau p^T,\ \ B_0^{21}=\partial_{ss}q+2qp^Tq-\tau q$$
Using first and last equations of \eqref{systeme_eq_Bj} with $j=0$ and integrating, we compute $B_0^{11}$ and $B_0^{22}$.
$$B_0^{11}=p^T(\partial_{s}q)-(\partial_{s}p^T)q$$
$$B_0^{22}=q(\partial_{s}p^T)-(\partial_{s}q)p^T$$
Finally, it remains two equations (second and third of \eqref{systeme_eq_Bj} with $j=0$). Replacing $B_0^{ki}$ in these equations we obtain a system of equations satisfied by $p^T$ and $q$:
\begin{equation}
\label{chazyequation}
\left\{\begin{array}{l}
\partial_{sss}p^T+3(\partial_{s}p^T)qp^T+3p^Tq(\partial_{s}p^T)-\tau\partial_sp^T+p^TD_{s,\a}=0\\
\partial_{sss}q+3(\partial_{s}q)p^Tq+3qp^T(\partial_{s}q)-\tau\partial_sq-D_{s,\a}q=0
\end{array}\right.
\end{equation}
Studying the compatibility condition of $A$ and $C$ the same way as for $A$ and $B$, $p^T$ and $q$ satisfied a coupled non-linear heat equation:
\begin{equation}
\label{heat}
\left\{\begin{array}{l}
-\frac{1}{2}\partial_{ss}p^T-\partial_{\tau}p^T=p^Tqp^T\\
-\frac{1}{2}\partial_{ss}q+\partial_{\tau}q=qp^Tq
\end{array}\right. .
\end{equation}
\end{proof}
In appendix A of \cite{KZ22}, doing formal computation on a semi infinite interval, Kimura and Zahabi obtained a scalar version of the system of coupled differential equation \eqref{chazyequation}.\\
Similar equations as in Proposition \ref{prop8} appeared in the study of the limiting one-point distribution of periodic TASEP \cite{BLS22}. The authors obtained coupled mKdV equations and coupled non linear heat equations. Combining these two equations, they proved the second log-derivative of the Fredholm determinant they studied satisfies the second Kadomtsev-Petviashvili equation. It is possible to do a similar computation and to obtain a PDE for the second log-derivative of the Fredholm determinant of $\K$. The next section will be partially devoted to this computation.
\section{The logarithmic derivative of \textit{F} and proof of Theorem \ref{mainthm}}
\label{section4}
Finally we prove Theorem \ref{mainthm}.\\
From Remark \ref{invK} and Proposition \ref{FandK}, the Fredholm determinant of $(1-\K)$ is different from zero and $1-\K$ is invertible.\\
Defining
\begin{equation}
\tilde{F}:=(1-\K)^{-1}\tilde{f},
\label{Ftilde},\ \ \text{with }\tilde{f} \text{ as in equation } \eqref{f},
\end{equation}
if we write $\tilde{F}=\left (\begin{array}{c}
F_0 \\
\hline
F_1\\
\vdots\\
F_N
\end{array}\right)$ and $\tilde{g}=\left (\begin{array}{c}
g_0 \\
\hline
g_1\\
\vdots\\
g_N
\end{array}\right)$ (where $\tilde{g}$ is defined in equation \eqref{g}), we have the following result.
\begin{lemma}
Let $\Delta$ be as in \eqref{asymp} and $F_i$'s, $g_i$'s as above. Then:
$$\Delta=\int_{\Sigma\cup i\R}\left (\begin{array}{c}
F_1(\mu)\\
\vdots\\
F_N(\mu)
\end{array}\right)\left(g_1(\mu),\dots,g_N(\mu)\right)d\mu$$
\label{lemmaDelta}
\end{lemma}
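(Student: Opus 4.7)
The plan is to invoke the standard IIKS Cauchy representation for the solution $\Gamma$ of RHP \ref{RHP}. Specifically, since $\K$ is integrable in the sense of IIKS with kernel $\tilde{f}^T(u)\tilde{g}(v)/(u-v)$ and $1-\K$ is invertible (cf. Remark \ref{invK} and Proposition \ref{FandK}), I would write
\[
\Gamma(z) = I_{N+1} - \int_{\Sigma\cup i\R}\frac{\tilde{F}(\mu)\,\tilde{g}^T(\mu)}{\mu-z}\,d\mu,
\]
with $\tilde{F} = (1-\K)^{-1}\tilde{f}$ as in \eqref{Ftilde}. Checking that this candidate solves RHP \ref{RHP} is routine: the analyticity off the contour is immediate, the normalization at infinity is built in, and the jump condition $\Gamma_+ = \Gamma_-(I_{N+1} - 2\pi i\,\tilde{f}\tilde{g}^T)$ follows from the Sokhotski--Plemelj formulas together with the identity $(1-\K)^{-1}\tilde{f} - \tilde{f} = \K(1-\K)^{-1}\tilde{f}$. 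The uniqueness noted in Remark \ref{det1tr0} then identifies this Cauchy integral with $\Gamma$.

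Given this representation, I would expand at $z=\infty$ via $-(\mu-z)^{-1} = \sum_{j\geq 0}\mu^j z^{-(j+1)}$ and read off the coefficient of $1/z$, obtaining
\[
\Gamma_1 = \int_{\Sigma\cup i\R}\tilde{F}(\mu)\,\tilde{g}^T(\mu)\,d\mu.
\]
Decomposing $\tilde{F} = (F_0,F_1,\dots,F_N)^T$ and $\tilde{g} = (g_0,g_1,\dots,g_N)^T$ in the chosen $1+N$ block structure, the outer product $\tilde{F}(\mu)\tilde{g}^T(\mu)$ becomes a block matrix whose lower-right $N\times N$ block is exactly the integrand claimed in the lemma. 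Comparing this block to the $\Delta$-block in the asymptotic expansion \eqref{asymp} yields the formula.

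The only step that requires any genuine work is the verification of the IIKS Cauchy representation itself, and this is a standard mechanical computation once $\K$ is known to be IIKS-integrable and $1-\K$ to be invertible. Everything else is purely book-keeping between the block decomposition of $\Gamma_1$ and the block decomposition of $\tilde{F}\tilde{g}^T$; no further analytic input is needed.
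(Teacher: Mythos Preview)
Your proposal is correct and follows essentially the same route as the paper: invoke the standard IIKS Cauchy representation $\Gamma(z)=I_{N+1}-\int_{\Sigma\cup i\R}\tilde{F}(\mu)\tilde{g}^T(\mu)(\mu-z)^{-1}\,d\mu$, expand at infinity to read off $\Gamma_1=\int\tilde{F}\tilde{g}^T$, and then pick out the lower-right $N\times N$ block. If anything, you supply slightly more detail than the paper does (the Sokhotski--Plemelj verification of the jump), whereas the paper simply cites the general IIKS theory and the relation $\tilde{F}=\Gamma_+\tilde{f}$.
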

\begin{proof}
According to theory of Riemann-Hilbert problem, since $(1-\K)$ is invertible, the resolvent of $\K$ and the unique solution to RHP \ref{RHP} are linked and $\tilde{F}=\Gamma_+\tilde{f}$.\\
With this last equality we obtain:
$$\Gamma(\xi)=I_{N+1}-\int_{\Sigma\cup i\R}\dfrac{\tilde{F}(\mu)\tilde{g}^T(\mu)}{\mu-\xi}d\mu$$ 
Expanding $\dfrac{1}{\mu-\xi}$ we express $\Gamma_1$ in function of $\tilde{F}$ and $\tilde{g}$.
$$\Gamma_1=\int_{\Sigma\cup i\R}\tilde{F}(\mu)\tilde{g}^T(\mu)d\mu$$ According to this previous equality and because of the decomposition by blocks of $\Gamma_1$ in \eqref{RHP}, with \begin{equation}
\tilde{F}=\left (\begin{array}{c}
F_0 \\
\hline
F_1\\
\vdots\\
F_N
\end{array}\right) \text{ and } \tilde{g}=\left (\begin{array}{c}
g_0 \\
\hline
g_1\\
\vdots\\
g_N
\end{array}\right),
\label{Fg}
\end{equation} $$\Delta=\int_{\Sigma\cup i\R}\left (\begin{array}{c}
F_1(\mu)\\
\vdots\\
F_N(\mu)
\end{array}\right)\left(g_1(\mu),\dots,g_N(\mu)\right)d\mu.$$
\end{proof}
\begin{proposition}
\label{prop:logFdelta}
Let $F(\a+s,\tau,\k)$ be as in Theorem \ref{mainthm} and $\delta$ as in \eqref{asymp}. The following holds:
$$\dfrac{\partial}{\partial s}\log(F(\a+s,\tau,\k))=-\delta$$
\end{proposition}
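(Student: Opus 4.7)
The plan is to leverage Proposition \ref{FandK}, which rewrites $F(\a+s,\tau,\k)$ as the Fredholm determinant $\det(1-\K)_{L^2(\Sigma\cup i\R)}$. Since $1-\K$ is invertible by Remark \ref{invK}, Jacobi's formula applies and gives
\[
\frac{\partial}{\partial s}\log F(\a+s,\tau,\k) \;=\; -\Tr\bigl((1-\K)^{-1}\,\partial_s\K\bigr).
\]
The task is then to evaluate the right-hand side and identify it with $\delta$.

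The second step is to compute $\partial_s K(u,v)$ from the explicit form \eqref{defK}. Looking at \eqref{f} and \eqref{g}, the first entries of $\tilde f$ and $\tilde g$ do not depend on $s$, whereas for $j\geqslant 1$ the dependence is only through the factor $e^{\pm(a_j+s)\mu}$. Setting $M:=\diag(0,1,\dots,1)\in M_{N+1}(\C)$, we thus have
\[
\partial_s\tilde f(\mu) = \mu\,M\,\tilde f(\mu), \qquad \partial_s\tilde g(\mu) = -\mu\,M\,\tilde g(\mu).
\]
Because $M^T=M$, the cross terms combine as $\partial_s\bigl[\tilde f^T(u)\tilde g(v)\bigr] = (u-v)\,\tilde f^T(u)M\,\tilde g(v)$, so the denominator in $K(u,v)$ cancels and
\[
\partial_s K(u,v) \;=\; \tilde f^T(u)\,M\,\tilde g(v) \;=\; \sum_{j=1}^{N}\tilde f_j(u)\,\tilde g_j(v).
\]
This is a rank-$N$ kernel, which is the decisive simplification.

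For a rank-$N$ kernel $\sum_j a_j(u)b_j(v)$, composition with $(1-\K)^{-1}$ on the left and taking the trace yields $\sum_j \int [(1-\K)^{-1}a_j](u)\,b_j(u)\,du$. Applied here with $a_j=\tilde f_j$, $b_j=\tilde g_j$, and using $\tilde F=(1-\K)^{-1}\tilde f$ componentwise (the definition \eqref{Ftilde}), we obtain
\[
\Tr\bigl((1-\K)^{-1}\partial_s\K\bigr) \;=\; \sum_{j=1}^{N}\int_{\Sigma\cup i\R} F_j(u)\,g_j(u)\,du.
\]
By Lemma \ref{lemmaDelta} this sum is exactly $\Tr(\Delta)$, and Remark \ref{det1tr0} gives $\delta=\Tr(\Delta)$. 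Assembling the three identities proves $\partial_s\log F = -\delta$.

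The computations themselves are routine; the only point requiring a bit of care is justifying the differentiation under the Fredholm determinant and the trace-class nature of $\partial_s\K$, which is immediate from the rank-$N$ representation above and the decay of $\tilde f$, $\tilde g$ on the contours. The substantive idea is really the cancellation $\partial_s[\tilde f^T(u)\tilde g(v)] = (u-v)\tilde f^T(u)M\tilde g(v)$, which both removes the Cauchy singularity and produces a finite-rank operator whose trace is already packaged in the $1/\mu$-coefficient of $\Gamma$.
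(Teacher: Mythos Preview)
Your proof is correct and follows essentially the same route as the paper: Jacobi's formula, computation of $\partial_s K$ as a finite-rank kernel, identification of the resulting trace with $\Tr(\Delta)$ via Lemma \ref{lemmaDelta}, and finally $\delta=\Tr(\Delta)$ from Remark \ref{det1tr0}. The only differences are cosmetic: you derive $\partial_s K(u,v)=\sum_{j=1}^N\tilde f_j(u)\tilde g_j(v)$ through the clean cancellation $\partial_s[\tilde f^T(u)\tilde g(v)]=(u-v)\tilde f^T(u)M\tilde g(v)$, whereas the paper simply states that $\partial_s\K$ has kernel $\chi_{i\R}(u)\tilde f^T(u)\tilde g(v)\chi_\Sigma(v)$ (equivalent once one notes the supports of $\tilde f_j,\tilde g_j$); and you evaluate the trace directly via the finite-rank formula, while the paper expands in an orthonormal basis.
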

\begin{proof}
$$\dfrac{\partial}{\partial s}\log(F(\a+s,\tau,\k))=\dfrac{\partial}{\partial s}\log(\det(1-\K))=\dfrac{\partial}{\partial s}\Tr(\log(1-\K))=-\Tr\left((1-\K)^{-1}\partial_s\K\right)$$
Let $(e_n)_{n\in\N}$ be an orthonormal basis of $L^2(\Sigma\cup i\R)$.\\
$$\Tr((1-\K)^{-1}\partial_s\K)=\sum_{n\in\N}\langle(1-\K)^{-1}\partial_s\K e_n,e_n\rangle$$
But $\partial_s\K$ has kernel $\chi_{i\R}(u)\tilde{f}^T(u)\tilde{g}(v)\chi_\Sigma(v)$.
Then $$\Tr((1-\K)^{-1}\partial_s\K)=\sum_{n\in\N}\langle(1-\K)^{-1}(\chi_{i\R}\tilde{f}^T),e_n\rangle\langle\chi_{\Sigma}\tilde{g},e_n\rangle$$
According to \eqref{Fg}, definitions of $\tilde{f}$ \eqref{f}, $\tilde{g}$ \eqref{g} and $\tilde{F}$ \eqref{Ftilde},
$$\Tr((1-\K)^{-1}\partial_s\K)=\sum_{n\in\N}\left(\langle F_1,e_n\rangle,\dots,\langle F_N,e_n\rangle\right)\left (\begin{array}{c}
\langle g_1,e_n\rangle\\
\vdots\\
\langle g_N,e_n\rangle
\end{array}\right)=\Tr(\Delta)$$
The last equation is a consequence of Lemma \ref{lemmaDelta} and the fact that $e_n$ is an orthonormal basis.
Finally, because of $-\delta+\Tr(\Delta)=\Tr(\Gamma_1)=0$ (see Remark \ref{det1tr0}),
$$\dfrac{\partial}{\partial s}\log(F(\a+s,\tau,\k))=-\delta$$
\end{proof}
We use the previous proposition and the discussion on the Lax Pair to prove Theorem \ref{mainthm}.
\begin{proof}[Proof of Theorem \ref{mainthm}]
Using the previous proposition we derive $\dfrac{\partial}{\partial s}\log\left(F(\a+s,\tau,\k)\right)$ with respect to $s$. 
\begin{equation}
\dfrac{\partial^2}{\partial s^2}\log\left(F(\a+s,\tau,\k)\right)=-\partial_s\delta=p^T(s)q(s)
\end{equation}
because $\partial_s\delta=-p^Tq$ (see Remark \ref{deltas}). From the Proposition \ref{prop8} $p^T$ and $q$ satisfy equations \eqref{pqcoupledeq} and \eqref{heatequation}.\\

It remains to compute the asymptotics for $p_i(s)$ and $q_i(s)$ as $s\to\infty$.\\
Introduce the function $X$ defined by $X(\mu)\coloneqq \Gamma(s^{1/3}\mu)$. $X(\mu)$ has jumps on $\Sigma\cup i\R$ of the form $\left (\begin{array}{c|c}
1 & g^T(s^{1/3}\mu) \\
\hline
f(s^{1/3}\mu) & I_n 
\end{array}\right)$ where $$f(s^{1/3}\mu)=\begin{pmatrix}
-\sqrt{k_1-k_0}e^{-s^{4/3}\left(\frac{\mu^4}{4}-\frac{\tau\mu}{2s^{2/3}}-(1+\frac{a_1}{s})\mu\right)}\\
\vdots\\
-\sqrt{k_{N}-k_{N-1}}e^{-s^{4/3}\left(\frac{\mu^4}{4}-\frac{\tau\mu}{2s^{2/3}}-(1+\frac{a_N}{s})\mu\right)}
\end{pmatrix}\chi_{i\R}(\mu) \text{ and}$$ $$g(s^{1/3}\mu)=\begin{pmatrix}
-\sqrt{k_1-k_0}e^{s^{4/3}\left(\frac{\mu^4}{4}-\frac{\tau\mu}{2s^{2/3}}-(1+\frac{a_1}{s})\mu\right)}\\
\vdots\\
-\sqrt{k_{N}-k_{N-1}}e^{s^{4/3}\left(\frac{\mu^4}{4}-\frac{\tau\mu}{2s^{2/3}}-(1+\frac{a_N}{s})\mu\right)}
\end{pmatrix}\chi_{\Sigma}(\mu)$$

According to Figure \ref{fig:phase small norm}, one can deform the contour $\Sigma\cup i\R$ into $\tilde{\Sigma}\cup i\R$ so that the jump on $i\R$ and $\Sigma_+$ tends to the identity matrix $I_{N+1}$ as $s\to\infty$. Such a deformation for the jump on the contour $\Sigma_-$ is not possible.
\begin{figure}[!h]
    \begin{center}
        \includegraphics[scale = 1,trim = 10cm 4cm 10cm 4cm, clip]{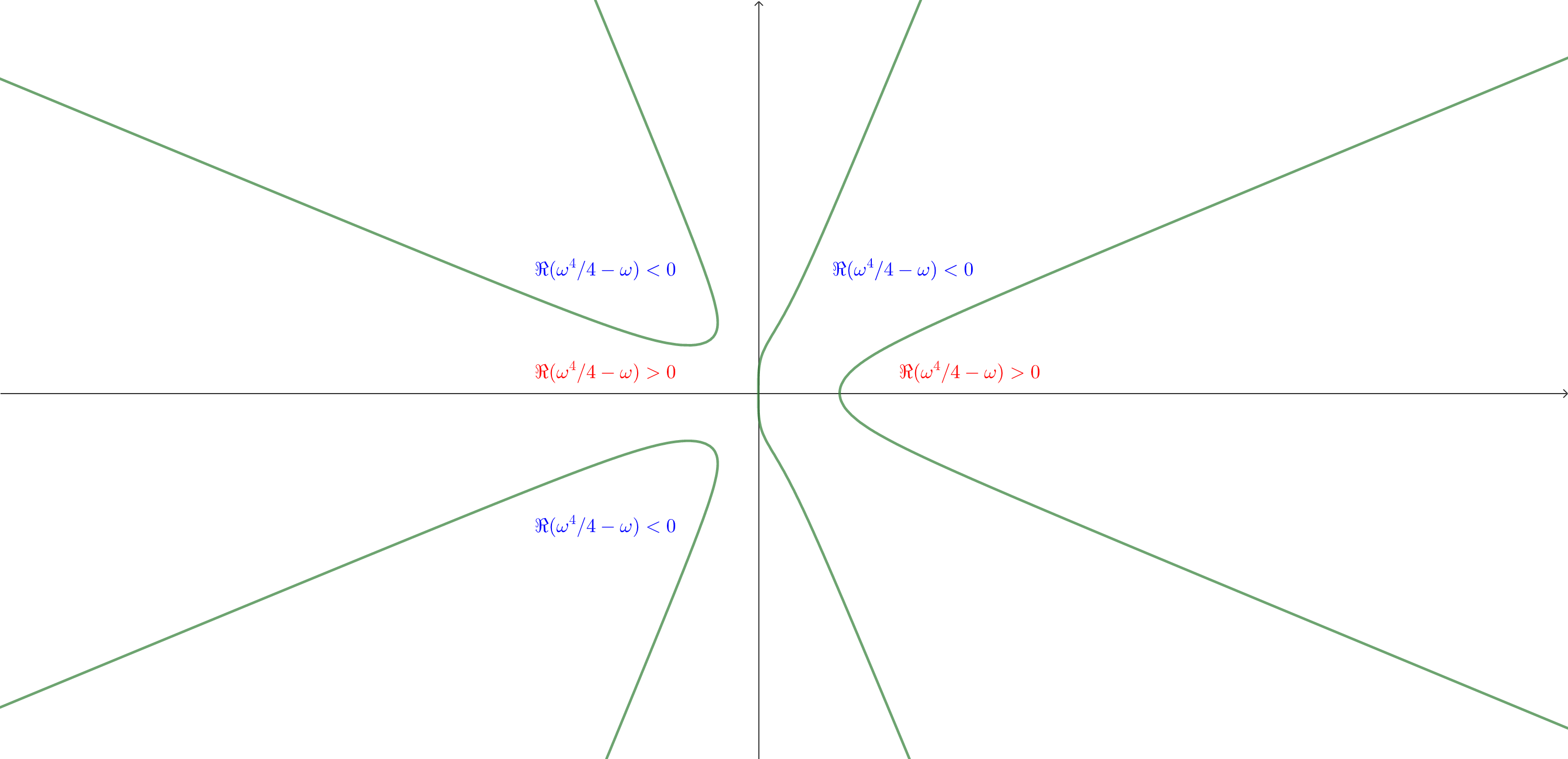}
        \caption{The sign of $\Re(\omega^4/4-\omega)$}
        \label{fig:phase small norm}
    \end{center}
\end{figure}

Then $X$ satisfies the following Riemann-Hilbert problem.
\begin{RHP}[RHP for $X$]
\label{RHPX}
We search a matrix valued function $X(\mu)=X(\mu;\a,\tau,s)$ such that:
\begin{itemize}
\item $X:\C\setminus\tilde{\Sigma}\cup i\R\to\mathcal{G}\ell_{N+1}(\C)$ is analytic
\item $X_+(\mu)=X_-(\mu)J(s^{1/3}\mu)$, $\mu\in\tilde{\Sigma}\cup i\R$ where $X$ is continuous up to boundary of the contours and $X_\pm(\mu)$ are non-tangential limits approaching $\mu$ from left(+) or right(-).
\item 
\begin{equation}
X(\mu)=I_{N+1}+\displaystyle\sum_{j\geqslant 1}\dfrac{X_j}{\mu^j}\text{ as } \mu\to\infty
\label{eq:asympX}
\end{equation}
\end{itemize}
\end{RHP}
We want to study the asymptotic of $X$ as $s\to\infty$ by applying the small norm theorem (see for instance Theorem 5.1.5 in \cite{Its11}).
\begin{lemma}
\label{lemma:limit X}
Let $X$ satisfies the Riemann-Hilbert Problem \ref{RHPX}. Then, as $s\to\infty$,
\begin{equation}
X(\mu,s)\to I_{N+1}+T(\mu) 
\end{equation} 
where $T$ is a upper triangular matrix with zero diagonal of the form $T(\mu)=\left (\begin{array}{c|c}
0 & t(\mu) \\
\hline
0_N & 0_{N\times N} 
\end{array}\right)$ with $0_N$ and $0_{N\times N}$ the zero matrix of size $N\times 1$ and $N\times N$ and $t$ a matrix of size $1\times N$.
\end{lemma}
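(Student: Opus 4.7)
The plan is to apply a standard Deift--Zhou small-norm argument after separating the jumps of $X$ into parts that approach the identity and a part that does not. First I would exploit the block-triangular structure of the jump: since $f$ and $g$ are supported on disjoint pieces (respectively $i\R$ and $\Sigma$), the jump $J(s^{1/3}\mu)$ is lower triangular on $i\R$, equal to $\begin{pmatrix}1 & 0 \\ f(s^{1/3}\mu) & I_N\end{pmatrix}$, and upper triangular on $\Sigma = \Sigma_-\cup\Sigma_+$, equal to $\begin{pmatrix}1 & g^T(s^{1/3}\mu) \\ 0 & I_N\end{pmatrix}$.

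Next I would run the sign analysis for the rescaled leading phase $\mu^4/4-\mu$, whose saddles are $\mu=1$ and $\mu=e^{\pm 2\pi i/3}$. According to Figure~\ref{fig:phase small norm}, $\operatorname{Re}(\mu^4/4-\mu)>0$ on $i\R\setminus\{0\}$, so $f(s^{1/3}\mu)$ is exponentially small there; deforming $\Sigma_+$ into $\tilde{\Sigma}_+$ through the real saddle $\mu=1$ along a steepest descent contour forces $\operatorname{Re}(\mu^4/4-\mu)<0$ on $\tilde{\Sigma}_+$, so $g(s^{1/3}\mu)$ is exponentially small there too. On $\Sigma_-$ the complex saddles $e^{\pm 2\pi i/3}$ contribute $\operatorname{Re}(\mu^4/4-\mu)=3/8>0$, and no admissible deformation removes this, so the $\Sigma_-$ jump persists.

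Then I would introduce a model solution to carry the persistent jump. Define $X^{(0)}(\mu):=I_{N+1}+T(\mu)$ with $T(\mu)=\begin{pmatrix}0 & t(\mu) \\ 0_N & 0_{N\times N}\end{pmatrix}$, where the row $t$ has components $t_j(\mu)=\frac{1}{2\pi i}\int_{\Sigma_-}\frac{g_j(s^{1/3}\lambda)}{\lambda-\mu}\,d\lambda$. The quartic decay of $g_j(s^{1/3}\lambda)$ at infinity on $\Sigma_-$ makes this Cauchy integral convergent for each fixed $s$, and Plemelj--Sokhotski gives $X^{(0)}$ precisely the jump of $X$ on $\Sigma_-$, no jumps elsewhere, and $X^{(0)}(\mu)\to I_{N+1}$ at infinity. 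Since $T$ is strictly upper triangular, $T^2=0$ and $(X^{(0)})^{-1}=I_{N+1}-T$. The block shape of $T$ is actually forced by the shape of the $\Sigma_-$ jump: the bottom two blocks of $X^{(0)}$ have trivial jump and identity normalization at infinity, so they are identically $(0_N, I_N)$, and similarly the $(1,1)$ entry is entire with limit $1$, hence equal to $1$. Finally I would set $R(\mu):=X(\mu)(X^{(0)}(\mu))^{-1}$: by construction $R$ has no jump on $\Sigma_-$, and on $i\R\cup\tilde{\Sigma}_+$ its jump is the conjugation by $X^{(0)}=I+T$ of a matrix close to the identity, yielding terms proportional to $f$, $g$, $tf$, $gt$, each going to zero uniformly by the sign analysis. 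The small-norm theorem (\cite{Its11}, Theorem~5.1.5) then gives $R\to I_{N+1}$ as $s\to\infty$, whence $X=R\cdot X^{(0)}\to I_{N+1}+T$.

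The main obstacle will be the uniformity of the small-norm estimate, because $t(\mu)$ is of size $\exp(s^{4/3}\cdot 3/8)$ near the complex saddles $e^{\pm 2\pi i/3}$, while on the decaying contours $f$ and $g$ behave like $\exp(-s^{4/3}c(\mu))$ with $c(\mu)\geq 0$ and $c$ vanishing at the origin of $i\R$ and at the real saddle $\mu=1$. The hard part is therefore to show that the distance between $\Sigma_-$ and $i\R\cup\tilde{\Sigma}_+$ produces enough separation so that the super-exponential decay of $f,g$ wins over the saddle-point growth of $t$; a local analysis (or the insertion of an explicit local parametrix) at the origin and at $\mu=1$ may be needed to close the estimate.
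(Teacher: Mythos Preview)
Your approach is essentially the paper's: peel off the non-decaying upper-triangular jump via an explicit model and apply a small-norm argument to the remainder. The only cosmetic difference is that the paper loads all of $\tilde\Sigma$ (both $\Sigma_-$ and the deformed $\tilde\Sigma_+$) into its model $\tilde X$, so that the residual problem $\mathcal X=X\tilde X^{-1}$ lives only on $i\R$, whereas you load only $\Sigma_-$ into $X^{(0)}$ and keep the residual $R$ on $i\R\cup\tilde\Sigma_+$. Since the $\tilde\Sigma_+$ contribution is already exponentially small this makes no real difference; in either case the model solution is exactly $I_{N+1}+T$ with $T$ of the stated block form, for the reasons you give (trivial jump and identity asymptotics on the lower $N$ rows and on the $(1,1)$ entry).

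You are in fact more careful than the paper on one point. The paper asserts that the jump of $\mathcal X$ on $i\R$ equals $I_{N+1}-2i\pi T_-(s^{1/3}\mu)$ and then applies the small-norm theorem directly to $T_-$; but the actual jump is the conjugate $\tilde X\,(I-2i\pi T_-)\,\tilde X^{-1}$, which produces exactly the cross-terms $tf$, $ft$, $tft$ you flag. Your observation that $t$ has size $e^{3s^{4/3}/8}$ from the complex saddles on $\Sigma_-$, while $f$ on $i\R$ decays only like $e^{-s^{4/3}t^4/4}$ and hence not at all near the origin, identifies a genuine uniformity issue that the paper has elided rather than resolved; it is a gap in the paper's argument, not in yours.
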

\begin{proof}
We introduce the lower and upper triangular matrices with zero diagonal $$T_-(\mu)=\tilde{f}(\mu)\tilde{g}^T(\mu)\chi_{i\R}(\mu) \text{ and } T_+(\mu)=\tilde{f}(\mu)\tilde{g}^T(\mu)\chi_{\Sigma}(\mu)$$ and the function $\tilde{X}$ solution of the following Riemann-Hilbert problem:
\begin{itemize}
\item $\tilde{X}:\C\setminus\tilde{\Sigma}\to\mathcal{G}\ell_{N+1}(\C)$ is analytic
\item $\tilde{X}_+(\mu)=\tilde{X}_-(\mu)\left(I_{N+1}-2i\pi T_+(s^{1/3}\mu)\right)$, $\mu\in\tilde{\Sigma}$ where $\tilde{X}$ is continuous up to boundary of the contours and $\tilde{X}_\pm(\mu)$ are non-tangential limits approaching $\mu$ from left(+) or right(-).
\item 
$$\tilde{X}(\mu)=I_{N+1}+\mathcal{O}(\mu^{-1})\text{ as } \mu\to\infty$$
\end{itemize}
Define $\mathcal{X}(\mu)\coloneqq X(\mu)\tilde{X}(\mu)^{-1}$, the function $\mathcal{X}$ satisfies a Riemann-Hilbert problem on the contours $i\R$ with the jump matrix $I_{N+1}-2i\pi T_-(s^{1/3}\mu)$. The matrix $T_-(s^{1/3}\mu)$ satisfies: let $C\coloneqq \sup_{i\in\{1,...,N\}}\frac{\mid\sqrt{k_i-k_{i-1}}\mid}{2\pi}$ then $$\parallel T_-(s^{1/3}\mu)\parallel_{L^\infty(i\R)}=\sup_{\mu\in i\R}Ce^{-s^{4/3}\left(\frac{\mu^4}{4}-\frac{\tau\mu^2}{2s^{2/3}}-\left(1+\frac{a_i}{s}\right)\mu\right)}\to 0 \text{ and }$$ $\parallel T_-\left(s^{1/3}\mu\right)\parallel_{L^2(i\R)}\to 0$
exponentially fast as $s\to\infty$. We conclude from the small norm theorem that $\mathcal{X}\sim I_{N+1}$ as $s\to\infty$ which implies $X\sim I_{N+1}+T$ as $s\to\infty$ where $T$ is a upper triangular matrix with zero diagonal of the form $T(\mu)=\left (\begin{array}{c|c}
0 & t(\mu) \\
\hline
0_N & 0_{N\times N} 
\end{array}\right)$ with $0_N$ and $0_{N\times N}$ the zero matrix of size $N\times 1$ and $N\times N$ and $t$ a matrix of size $1\times N$.\\
\end{proof}
From Lemma \ref{lemma:limit X} we deduce asymptotics for $p_i$ and $q_i$ and we finish the proof of Theorem \ref{mainthm}.\\
As for $\Gamma_1$ in the proof of Lemma \ref{lemmaDelta}, one can express $X_1(s)$ as $\displaystyle\int_{\tilde{\Sigma}\cup i\R}X_-(\mu)\tilde{f}(s^{1/3}\mu)\tilde{g}^T(s^{1/3}\mu)d\mu$.\\
Moreover $\Gamma_1(s)$ and $X_1(s)$ satisfy the following relation, $X_1(s)=s^{-1/3}\Gamma_1(s)$. Then
\begin{equation}
\begin{array}{rl}q_i(s)&=\Gamma_1(s)_{i+1,1}=s^{1/3}X_1(s)_{i+1,1}=\left(s^{1/3}\displaystyle\int_{\tilde{\Sigma}\cup i\R}X_-(\mu)\tilde{f}(s^{1/3}\mu)\tilde{g}^T(s^{1/3}\mu)d\mu\right)_{i+1,1}\\
 &\sim\dfrac{\sqrt{k_i-k_{i-1}}}{2i\pi}\displaystyle\int_{i\R}e^{-\frac{\mu^4}{4}+\tau\frac{\mu^2}{2}+(a_i+s)\mu}d\mu=\sqrt{k_i-k_{i-1}}Q(a_i+s)\end{array}
\label{def:Q}
\end{equation} where $Q$ satisfies the differential equation $Q'''(s)-\tau Q'(s)=sQ(s)$.\\
\begin{equation}
\begin{array}{rl}p_i(s)&=\Gamma_1(s)_{1,i+1}=s^{1/3}X_1(s)_{1,i+1}=\left(s^{1/3}\displaystyle\int_{\tilde{\Sigma}\cup i\R}X_-(\mu)\tilde{f}(s^{1/3}\mu)\tilde{g}^T(s^{1/3}\mu)d\mu\right)_{1,i+1}\\
 &\sim\dfrac{\sqrt{k_i-k_{i-1}}}{2i\pi}\displaystyle\int_{\Sigma}e^{\frac{\mu^4}{4}-\tau\frac{\mu^2}{2}-(a_i+s)\mu}d\mu=\sqrt{k_i-k_{i-1}}P(a_i+s)\end{array}
 \label{def:P}
\end{equation} where $P$ satisfies the differential equation $P'''(s)-\tau P'(s)=-sP(s)$.
\end{proof}

Asymptotics for $P$ and $Q$ as $s\to\infty$ are computed in the appendix.
\section*{Acknoledgements}
I am very grateful to Mattia Cafasso for many advices given and his support during the preparation of this paper. This work has been supported by the European Union Horizon 2020 research and innovation program under the Marie Sklodowska-Curie RISE 2017 grant agreement no. 778010 IPaDEGAN and by the IRP Probabilités Intégrables, Intégrabilité Classique et Quantique (PIICQ), funded by the CNRS. 

\section{Appendix}
\subsection{Asymptotics of $P$ and $Q$}
As $s\to\infty$, $P$ and $Q$ have the following asymptotic.
\begin{proposition}
Let $P$ and $Q$ defined respectively as in equations \eqref{def:P} and \eqref{def:Q}. Then, as $s\to\infty$:
\begin{equation}
Q(s)\displaystyle\sim_{s\to\infty}\sqrt{\dfrac{2}{3\pi}}s^{-1/3}e^{-\frac{3}{8}s^{4/3}-\frac{\tau}{4}s^{2/3}+\frac{\tau^2}{6}}\cos\left(\frac{3}{4}\sin\left(\frac{2\pi}{3}\right)s^{4/3}-\frac{\tau}{2}\sin\left(\frac{2\pi}{3}\right)s^{2/3}-\frac{\pi}{6}\right)
\end{equation}
\begin{equation}
P(s)\displaystyle\sim_{s\to\infty}\sqrt{\dfrac{2}{3\pi}}s^{-1/3}e^{\frac{3}{8}s^{4/3}+\frac{\tau}{4}s^{2/3}-\frac{\tau^2}{6}}\cos\left(\frac{3}{4}\sin\left(\frac{2\pi}{3}\right)s^{4/3}-\frac{\tau}{2}\sin\left(\frac{2\pi}{3}\right)s^{2/3}-\frac{\pi}{6}\right)
\end{equation}
\end{proposition}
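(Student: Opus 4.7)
The plan is a classical steepest descent analysis of the oscillatory integrals defining $P$ and $Q$. I describe the strategy for $Q$; the analysis for $P$ is entirely parallel, the only difference being the overall sign of the phase.

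First I would rescale $\mu=s^{1/3}\omega$ in \eqref{def:P et Q} to obtain
$$Q(s)=\frac{s^{1/3}}{2i\pi}\int_{i\R}e^{s^{4/3}\varphi(\omega)+\frac{\tau s^{2/3}}{2}\omega^{2}}\,d\omega,\qquad\varphi(\omega)=-\tfrac{\omega^{4}}{4}+\omega,$$
so that $s^{4/3}$ is the large parameter. The full saddle equation $\omega^{3}-\tau s^{-2/3}\omega-1=0$ has roots $\omega_k^*=\omega_k+\tfrac{\tau\omega_k^{2}}{3s^{2/3}}+O(s^{-4/3})$, with $\omega_k\in\{1,e^{\pm 2i\pi/3}\}$. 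Only the complex conjugate pair $\omega_\pm=e^{\pm 2i\pi/3}$ is reachable by deforming the imaginary axis within the convergence sectors $\{\cos(4\arg\omega)>0\}$ of $e^{-\omega^{4}/4}$; the real saddle $\omega=1$ is cut off from $i\R$ by a Stokes line.

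Next I would evaluate the full phase at the perturbed saddle to order $s^{0}$. Using $\omega_k^{3}=1$, a short expansion gives
$$s^{4/3}\varphi(\omega_k^*)+\tfrac{\tau s^{2/3}}{2}(\omega_k^*)^{2}=\tfrac{3\omega_k}{4}s^{4/3}+\tfrac{\tau\omega_k^{2}}{2}s^{2/3}+\tfrac{\tau^{2}}{6}+o(1),$$
the real constant $\tau^{2}/6$ arising from the $\tau$-induced shift of the saddle and producing the $e^{\tau^{2}/6}$ factor in the final asymptotic. Specializing to $\omega_+=e^{2i\pi/3}$ gives real part $-\tfrac{3}{8}s^{4/3}-\tfrac{\tau}{4}s^{2/3}+\tfrac{\tau^{2}}{6}$ and imaginary part $\tfrac{3}{4}\sin(2\pi/3)\,s^{4/3}-\tfrac{\tau}{2}\sin(2\pi/3)\,s^{2/3}$, which are precisely the exponential amplitude and the cosine argument of the claim.

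The last step is the Gaussian approximation at each saddle. With $\varphi''(\omega_k^*)=-3e^{\mp 2i\pi/3}+O(s^{-2/3})$, the transversal Gaussian contributes $e^{i\alpha_k}\sqrt{2\pi/(3s^{4/3})}$, where $\alpha_k$ is the steepest descent direction, determined only up to $\pi$ by the condition $\varphi''(\omega_k^*)e^{2i\alpha_k}<0$. The hard part will be pinning down this remaining sign, since it governs exactly the $-\pi/6$ phase shift in the cosine. I plan to resolve it by transporting the problem to the Fourier picture via $\omega=it/s^{1/3}$: there the contour is the real $t$-axis deformed upward, and the tangent directions at the two saddles are unambiguously $\alpha_\pm^{(t)}=\mp\pi/6$; translating back yields $\alpha_+=\pi/3$ and $\alpha_-=2\pi/3$ in the $\omega$-plane, so that the two saddle contributions combine to $2i\sin(\pi/3+\Theta)\,e^{R}\sqrt{2\pi/(3s^{4/3})}$ (with $R+i\Theta=\Phi(\omega_+^*)$) and, after multiplication by the external $s^{1/3}/(2i\pi)$, collapse to a real asymptotic via the identity $\sin(\pi/3+\Theta)=\cos(\Theta-\pi/6)$. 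For $P$, the phase is the negative of that for $Q$, so the saddle values are negated (which yields the growing exponential); the Pearcey contour $\Sigma$ is deformable onto the same pair $\omega_\pm$ and the cosine factor is unchanged. Error control on the portions of the contour far from the saddles is standard and follows from uniform exponential bounds on $|e^{\Phi(\omega)}|$.
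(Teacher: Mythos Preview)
Your proposal is correct and follows essentially the same steepest--descent route as the paper: identify the two complex saddles of the phase $\mu^3-\tau\mu-s=0$, deform $i\R$ (resp.\ $\Sigma$) through them, expand the exponent to second order and combine the two conjugate Gaussian contributions into a single cosine. The only cosmetic difference is that you first rescale $\mu=s^{1/3}\omega$ and treat the $\tau$--term perturbatively, whereas the paper keeps the original variable and writes the saddles via Cardano's formula; you also spell out explicitly how the steepest--descent direction fixes the $-\pi/6$ phase shift, a step the paper merely asserts.
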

\begin{proof}
By definition,
$$Q(s)=\dfrac{1}{2i\pi}\displaystyle\int_{i\R}e^{-\frac{\mu^4}{4}+\tau\frac{\mu^2}{2}+s\mu}d\mu$$
The derivative with respect to $\mu$ of $\theta_s(\mu)$ is $\frac{d}{ds}\theta_s(\mu)=\mu^3-\tau\mu-s$ whose roots are 
\begin{equation}
\mu_k\coloneqq\mu_k(s,\tau)=j^k\sqrt[3]{\frac{1}{2}\left(s+\sqrt{s^2-\frac{2^2\tau^3}{3^3}}\right)}+j^{-k}\sqrt[3]{\frac{1}{2}\left(s-\sqrt{s^2-\frac{2^2\tau^3}{3^3}}\right)}
\end{equation} where $j:=e^{\frac{2\pi}{3}}$ and $k\in\{0,1,2\}$. We only consider saddle points for $k=1$ and $2$.\\
Denote by $\mu_*$ either $\mu_1$ or $\mu_2$. One can deform the contour $i\R$ into a contour $\mathcal{C}_Q$ so that it passes through $\mu_1$ and $\mu_2$ and the following holds $\Re\left(\theta_s(\mu)-\theta_s(\mu_*)\right)<0$.\\
See Figure \ref{fig:Rethetamu-thetamu*} for the study of the sign of $\Re\left(\theta_s(\mu)-\theta_s(\mu_*)\right)$ for $s=100$ and $\tau=1$. As $s\to\infty$ and $\tau$ is fixed, the algebraic curve $\Re\left(\theta_s(\mu)-\theta_s(\mu_*)\right)=0$ keeps a similar shape and it is always possible to deform $i\R$ into $\mathcal{C}_Q$.
\begin{figure}[h]
    \begin{center}
        \includegraphics[scale = 0.4,trim = 4cm 1cm 10cm 2cm, clip]{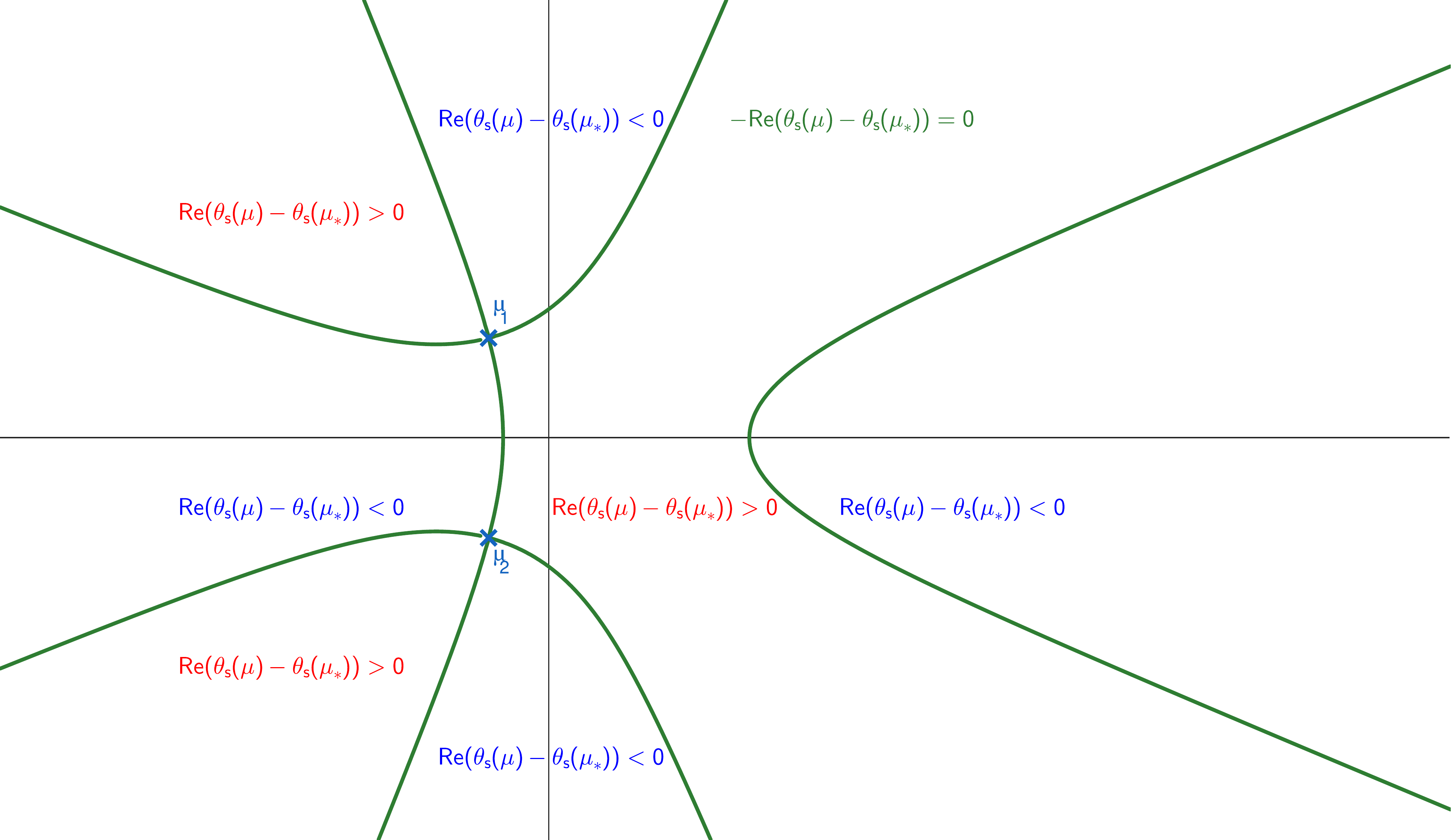}
        \caption{The sign of $\Re(\theta_s(\mu)-\theta_s(\mu_*))$ for $\tau=1$ and $s=100$}
        \label{fig:Rethetamu-thetamu*}
    \end{center}
\end{figure}

Then \begin{equation}
\begin{array}{rl}
Q(s)&=\dfrac{e^{-\theta_s(\mu_*)}}{2i\pi}\displaystyle\int_{\mathcal{C}_Q}e^{-\left(\theta_s(\mu)-\theta_s(\mu_*)\right)}d\mu\\
&=\dfrac{e^{-\theta_s(\mu_*)}}{2i\pi}\displaystyle\int_{\mathcal{C}_Q}e^{-\left(\frac{3}{2}\mu_*^2(\mu-\mu_*)^2\left(1-\frac{\tau}{3\mu_*^2}+\frac{2}{3\mu_*}(\mu-\mu_*)+\frac{1}{6\mu_*^2}(\mu-\mu_*)^2\right)\right)}d\mu
\end{array}
\end{equation}
and one can approximate $Q$ as $s\to\infty$ by expanding $\mu_*$ and by approximating the integral.
It follows
\begin{equation}
Q(s)\sim\sqrt{\dfrac{2}{3\pi}}s^{-1/3}e^{-\frac{3}{8}s^{4/3}-\frac{\tau}{4}s^{2/3}+\frac{\tau^2}{6}}\cos\left(\frac{3}{4}\sin\left(\frac{2\pi}{3}\right)s^{4/3}-\frac{\tau}{2}\sin\left(\frac{2\pi}{3}\right)s^{2/3}-\frac{\pi}{6}\right)
\end{equation}
The same method leads to the asymptotic for $P$ as $s\to\infty$.
\end{proof}

\subsection{Computations for equation \eqref{reducedKP}}
\label{appendix}
We now prove equation \eqref{reducedKP}.\\
Define $u(s,\tau)\coloneqq \log\left(F(\a+s,\tau,\k)\right)$ and $v(s,\tau)\coloneqq \dfrac{\partial^2}{\partial s^2}u(s,\tau)$. According to Proposition \ref{prop:logFdelta}, $v(s,\tau)=p^T(s)q(s)$. Differentiating $v$ with respect to $\tau$ and using equation \eqref{heat} to express $\partial_\tau p^T$ and $\partial_\tau q$, we obtain:
\begin{equation}
\partial_\tau v=\dfrac{1}{2}\partial_s\left(p^T\partial_sq-\partial_sp^Tq\right)
\end{equation}
Differentiating a second time $v$ with respect to $\tau$ (again using equation \eqref{heat}) yields to
\begin{equation}
\partial_{\tau\tau}v=\dfrac{1}{2}\left(2v\partial_sv+\dfrac{1}{2}\left(p^T\partial_{sss}q+\partial_{sss}p^Tq\right)-\dfrac{1}{2}\partial_s\left(\partial_sp^T\partial_sq\right)\right)
\label{etape1}
\end{equation}
Recall $v(s,\tau)=p^T(s)q(s)$, then differentiating three times with respect to $s$ the following equation holds:
\begin{equation}
\partial_s\left(\partial_sp^T\partial_sq\right)=\dfrac{1}{3}\left(\partial_{sss}v-\left(p^T\partial_{sss}q+\partial_{sss}p^Tq\right)\right)
\end{equation}
Replacing $\partial_s\left(\partial_sp^T\partial_sq\right)$ in equation \eqref{etape1} and using equation \eqref{chazyequation} we obtain:
\begin{equation}
\partial_{\tau\tau}v=\partial_s\left(-v\partial_sv-\dfrac{1}{12}\partial_{sss}v+\dfrac{1}{3}\tau\partial_sv\right).
\end{equation}
Replacing $v$ by $\partial_{ss}u$ and integrating twice with respect to $s$ we prove $u$ satisfies equation \eqref{reducedKP}.

\end{document}